\documentclass[10pt,aps,prl,nofootinbib,longbibliography,notitlepage,twocolumn,floatfix]{revtex4-2}

\usepackage{amsthm}
\usepackage{paralist}
\usepackage{tikz}
\usepackage{graphicx}
\usepackage{tabularx}
\usepackage{amsmath}
\usepackage{amstext}
\usepackage{amssymb}
\usepackage{amsfonts}
\usepackage{amsbsy}
\usepackage{xfrac,xcolor}
\usepackage{bbm}
\usepackage{relsize}
\usepackage[colorlinks,citecolor=blue]{hyperref}
\usepackage{wrapfig}
\usepackage{longtable}
\usepackage{booktabs}
\usepackage{url}
\usepackage{subfigure}
\usepackage{dsfont}
\usepackage{dcolumn}
\usepackage[inline]{enumitem}
\usepackage{bm}
\usepackage{esint}
\usepackage{multirow}
\usepackage{cleveref}
\usepackage{mathrsfs}
\usepackage{physics}
\usepackage{floatflt}
\usepackage{extarrows}
\usepackage{orcidlink}
\usepackage{datetime}
\usepackage{comment}
\usepackage[super]{nth}
\usepackage{tikz-cd}
\usepackage{sidecap}
\usepackage{outlines}
\usepackage{lineno}

\usetikzlibrary{arrows, shapes.misc, shadows.blur}

\setlist[enumerate]{left=0pt, labelwidth=!,parsep=0pt,topsep=0pt,itemsep=0pt}
\newcommand{\id}[1]{\left\langle{#1}\right\rangle}
\newcommand{\ff}{\mathbb{F}_2[x,y,x^{-1},y^{-1}]}

\newcommand{\supp}{\operatorname{Supp}}
\newcommand{\np}{\operatorname{Newt}}
\def\m{\mathfrak{m}}

\newcommand{\mb}{\mathbb}

\newcommand{\mc}{\mathcal}

\def\E{\mathcal{E}}

\definecolor{fractoncolor}{HTML}{332237}
\definecolor{lineoncolor}{HTML}{918CC6}
\definecolor{mobilecolor}{HTML}{EBC4E3}
\definecolor{emptycolor}{RGB}{245, 245, 248}

\newtheoremstyle{mystyle}
{\topsep}
{\topsep}
{\normalfont}
{}
{\bfseries}
{.}
{ }
{}

\newtheorem{theorem}{Theorem}
\newtheorem{lemma}{Lemma}
\newtheorem{definition}{Definition}

\allowdisplaybreaks[4]

\tikzset{greensq/.style={rectangle,fill=green,draw=black,line width=1.5pt}}
\tikzset{grayop/.style={circle,fill=gray,draw=black,line width=1.5pt}}
\tikzset{bold/.style={color=blue, line width=4pt}}
\tikzset{redop/.style={circle,fill=red,draw=black,line width=1.5pt}}
\tikzset{blueop/.style={circle,fill=blue,draw=black,line width=1.5pt}}
\tikzset{greenop/.style={circle,fill=green,draw=black,line width=1.5pt}}

\hypersetup{
	colorlinks=true,
	linkcolor=blue,
	filecolor=magenta,
	urlcolor=magenta,
}

\renewcommand{\theparagraph}{\arabic{paragraph}.}
\usepackage{titlesec}
\titleformat{\section}[runin]
  {\itshape\normalsize\bfseries}
  {\arabic{section}.}
  {0.5em}
  {}
\titlespacing*{\section}{10pt}{1ex}{0.5ex}
\renewcommand{\thesection}{\arabic{section}}

\titleformat{\subsection}[runin]
  {\itshape\normalsize}
  {\arabic{section}.}
  {0.5em}
  {}
\titlespacing*{\subsection}{0pt}{0.5ex}{0.5ex}
\renewcommand{\thesubsection}{\arabic{section}}



\begin{document}

\begin{abstract}
    In topological phases of matter, fusion rules dictate how anyonic topological charges combine. However, the transformation of quasiparticle mobility under fusion remains largely unexplored. In this letter, we reveal that restricted mobility classes obey their own complex multi-channel fusion algebras. We introduce a family of exactly solvable models with $\mathbb{Z}_2$ topological order enriched by subsystem symmetries to explicitly demonstrate these structures. Within this framework, mobility constraints arise from enforcing symmetries supported on specific subsets. When excitations fuse, these rigid geometric constraints interfere spatially. At the macroscopic level, this deterministic geometric interference manifests as a multi-channel fusion ring. We present three explicit mobility fusion phenomena realized in distinct models: (i) Fibonacci fusion rules; (ii) tensor products of Fibonacci rules; and (iii) lineon period transmutation.
\end{abstract}

\title{Fusion Rules of Mobility}
\author{Jie-Yu Zhang\orcidlink{0000-0003-4414-7680}}
\author{Peng Ye\orcidlink{0000-0002-6251-677X}}
\email{yepeng5@mail.sysu.edu.cn}
\affiliation{
    Guangdong Provincial Key Laboratory of Magnetoelectric Physics and Devices, State Key Laboratory of Optoelectronic Materials and Technologies, and
    School of Physics, Sun Yat-sen University, Guangzhou, 510275, China}
\maketitle

\textit{Introduction.---} Anyon fusion is a central concept in topological phases of matter, establishing the algebraic rules for combining topological charges. While these charge fusion rules are well understood, the topological charge is not the only property altered during fusion. A fundamental yet underexplored aspect is the transformation of quasiparticle mobility.

In exotic phases of matter, quasiparticle dynamics can be highly restricted, dividing excitations into distinct mobility classes such as fully mobile anyons, sub-dimensional lineons, and immobile fractons~\cite{chamon2005,Haah2011local,haah2015,fracton1,Nandkishore2019}. While the fusion of their topological charges may be trivial, the fusion of their \textit{mobility classes} exhibits novel behaviors. For instance, in the X-cube model~\cite{fracton1}, two fractons can fuse into a lineon, and two lineons can fuse into a planon. Traditionally, such mobility transformations have been viewed as informal, kinematic consequences of spatial constraints. This naturally raises a fundamental question: does an underlying, strict algebraic structure—a ``fusion rule''—govern mobility classes themselves?

In this Letter, we answer affirmatively: mobility classes are governed by complex multi-channel fusion algebras. We establish this framework within subsystem symmetry-enriched topological (SSET) phases~\cite{Stephen2022,SSET1}. In systems with subsystem symmetry—a subclass of generalized symmetry~\cite{You2018,gaiotto_generalized_2015,yoshida_topological_2016,mcgreevy_generalized_2023}—mobility constraints are enforced by symmetries whose support lies on specific subsets of the system. Consequently, each mobility class carries a rigid geometric symmetry constraint.

When two excitations fuse, their respective subsets interfere spatially. The resulting mobility of the fused excitation is generally not unique; it is highly sensitive to the microscopic relative position of the fusing constituents. In exact lattice models of non-Abelian topological orders (e.g., quantum double models), macroscopic multi-channel fusion outcomes necessitate hidden microscopic degrees of freedom~\cite{buerschaper2009mapping,padmanabhan2015nonabelian,huang2025bridging}. Here, the precise microscopic relative position acts as this hidden degree of freedom. At the macroscopic level, where this exact spatial separation is coarse-grained, the deterministic geometric interference manifests precisely as a multi-channel fusion ring.

To concretely demonstrate this algebraic structure, we construct a family of exactly solvable models featuring $\mathbb{Z}_2$ topological order enriched by tunable subsystem symmetries. By systematically varying the model construction, we alter the subsystem symmetry constraints, yielding explicit mobility fusion behaviors. We detail three distinct scenarios: Fibonacci-like fusion rules, tensor products of Fibonacci rules, and lineon period transmutation. Crucially, this multi-channel nature strictly characterizes the algebraic structure of the mobility fusion ring and is distinct from non-Abelian braiding statistics.

\textit{Exactly Solvable Models.---} Specifically, we consider qubit models defined on an infinite 2D square lattice, with qubits residing on both vertices and edges (3 qubits per unit cell). The Hamiltonian consists of three mutually commuting stabilizer terms, defined as follows:
\begin{equation}
    H = -\sum_{v} A_v - \sum_{p} B_p - \sum_{v} C_v.
\end{equation}
The term $A_v$ is a decorated version of the toric code vertex term. For a vertex $v$ at position $(i_0,j_0)$ (hereafter, $i,j,i_0,j_0$ take values in $\mathbb{Z}$), the vertex term reads:
\begin{equation}
    A_v = \left[ \prod_{e \in \mathrm{star}(v)} X_e \right] \otimes \left[ \prod_{v' \in S_v(f)} Z_{v'} \right],
\end{equation}
where $\mathrm{star}(v)$ denotes the four edges adjacent to $v$, and $S_v(f)$ is the set of vertices where $Z$ operators are placed according to the decoration pattern. This pattern is conveniently encoded by a Laurent polynomial $f(x,y) \in \mathbb{F}_2[x,y,x^{-1},y^{-1}]$:
\begin{equation}
    f(x,y) = \sum_{i,j} c_{ij} x^i y^j, \quad c_{ij}\in \{0,1\}.
\end{equation}
Each monomial $x^i y^j$ in $f$ with $c_{ij}=1$ corresponds to a $Z$ operator at $(i_0+i, j_0+j)$. Different choices of $f(x,y)$ yield distinct subsystem symmetry structures while keeping the underlying topological order invariant.

For a plaquette $p$ centered at $(i_0+0.5, j_0+0.5)$, the $B_p$ term is
\begin{equation}
    B_p = Z_{(i_0+0.5, j_0)} Z_{(i_0+0.5, j_0+1)} Z_{(i_0, j_0+0.5)} Z_{(i_0+1, j_0+0.5)}.
\end{equation}

The final term, $C_v$, geometrically couples the toric code flux ($m$-anyons) to the subsystem symmetries dictated by $f(x,y)$. It is defined as
\begin{equation}
    C_v = X_{(i_0,j_0)} \otimes \left[ \prod_{e \in T_v(f)} Z_e \right],
\end{equation}
with the specific pattern of $T_v(f)$ shown in the third column of Fig.~\ref{fig:Hamiltonian} (not written explicitly here); the algebraic construction of $C_v$ from $f(x,y)$ is detailed in the Supplemental Material~\cite{sm}.

In Fig.~\ref{fig:Hamiltonian} we illustrate the three representative cases: \textit{Model (a)} ($f_1=1+y$, Fig.~\ref{fig:model_fibonacci}), \textit{Model (b)} ($f_2=(1+x)(1+y)$, Fig.~\ref{fig:model_tensor}), and \textit{Model (c)} ($f_3=(1+x)(1+y^2)(1+y+y^2)$, Fig.~\ref{fig:model_hierarchy}).

\begin{figure}[ht]
    \centering

    \subfigure[\label{fig:model_fibonacci}]{
        \begin{tikzpicture}[scale=0.59, every node/.style={scale=0.8}, dot/.style={circle, fill=#1, minimum size=6pt, inner sep=0}]

            \begin{scope}[local bounding box=Av,shift={(0,0)}]
                \draw[step=1cm, gray!45, very thin] (-0.4,-0.4) grid (2.4,2.4);
                \draw[-latex] (1,1) to (-0.4,1) node[above] {$x$};
                \draw[-latex] (1,1) to (1,-0.6) node[right] {$y$};

                \node[dot=magenta!100] at (0.5,1) {\color{white}$X$};
                \node[dot=magenta!100] at (1.5,1) {\color{white}$X$};
                \node[dot=magenta!100] at (1,1.5) {\color{white}$X$};
                \node[dot=magenta!100] at (1,0.5) {\color{white}$X$};

                \node[dot=cyan!100] at (1,0) {\color{white}$Z$};
                \node[dot=cyan!100] at (1,1) {\color{white}$Z$};
            \end{scope}

            \begin{scope}[local bounding box=Bp,shift={(4,0)}]
                \draw[step=1cm, gray!45, very thin] (-0.4,-0.4) grid (2.4,2.4);
                \foreach \x/\y in {2/2} {
                        \draw[-latex] (\x,\y) to (\x-1.5,\y) node[above] {$x$};
                        \draw[-latex] (\x,\y) to (\x,\y-1.5) node[right] {$y$};
                    }
                \node[dot=cyan!100] at (1.5,1) {\color{white}$Z$};
                \node[dot=cyan!100] at (1,1.5) {\color{white}$Z$};
                \node[dot=cyan!100] at (1.5,2) {\color{white}$Z$};
                \node[dot=cyan!100] at (2,1.5) {\color{white}$Z$};
            \end{scope}

            \begin{scope}[local bounding box=Cv,shift={(8,0)}]
                \draw[step=1cm, gray!45, very thin] (-0.4,-0.4) grid (2.4,2.4);
                \foreach \x/\y in {1/1} {
                        \draw[-latex] (\x,\y) to (\x-1.5,\y) node[above] {$x$};
                        \draw[-latex] (\x,\y) to (\x,\y-1.5) node[right] {$y$};
                    }
                \node[dot=magenta!100] at (1,1) {\color{white}$X$};
                \node[dot=cyan!100] at (1.5,1) {\color{white}$Z$};
                \node[dot=cyan!100] at (1.5,2) {\color{white}$Z$};
                \node[dot=cyan!100] at (2,1.5) {\color{white}$Z$};
            \end{scope}
        \end{tikzpicture}
    }

    \vspace{0.2em}

    \subfigure[\label{fig:model_tensor}]{
        \begin{tikzpicture}[scale=0.59, every node/.style={scale=0.8}, dot/.style={circle, fill=#1, minimum size=6pt, inner sep=0}]

            \begin{scope}[local bounding box=Av,shift={(0,0)}]
                \draw[step=1cm, gray!45, very thin] (-0.4,-0.4) grid (2.4,2.4);
                \foreach \x/\y in {1/1} {
                        \draw[-latex] (\x,\y) to (\x-1.5,\y) node[above] {$x$};
                        \draw[-latex] (\x,\y) to (\x,\y-1.5) node[right] {$y$};
                    }
                \node[dot=magenta!100] at (0.5,1) {\color{white}$X$};
                \node[dot=magenta!100] at (1.5,1) {\color{white}$X$};
                \node[dot=magenta!100] at (1,1.5) {\color{white}$X$};
                \node[dot=magenta!100] at (1,0.5) {\color{white}$X$};

                \node[dot=cyan!100] at (0,0) {\color{white}$Z$};
                \node[dot=cyan!100] at (1,0) {\color{white}$Z$};
                \node[dot=cyan!100] at (0,1) {\color{white}$Z$};
                \node[dot=cyan!100] at (1,1) {\color{white}$Z$};
            \end{scope}

            \begin{scope}[local bounding box=Bp,shift={(4,0)}]
                \draw[step=1cm, gray!45, very thin] (-0.4,-0.4) grid (2.4,2.4);
                \foreach \x/\y in {2/2} {
                        \draw[-latex] (\x,\y) to (\x-1.5,\y) node[above] {$x$};
                        \draw[-latex] (\x,\y) to (\x,\y-1.5) node[right] {$y$};
                    }
                \node[dot=cyan!100] at (1.5,1) {\color{white}$Z$};
                \node[dot=cyan!100] at (1,1.5) {\color{white}$Z$};
                \node[dot=cyan!100] at (1.5,2) {\color{white}$Z$};
                \node[dot=cyan!100] at (2,1.5) {\color{white}$Z$};
            \end{scope}

            \begin{scope}[local bounding box=Cv,shift={(8,0)}]
                \draw[step=1cm, gray!45, very thin] (-0.4,-0.4) grid (2.4,2.4);
                \foreach \x/\y in {1/0} {
                        \draw[-latex] (\x,\y) to (\x-1.5,\y) node[above] {$x$};
                        \draw[-latex] (\x,\y) to (\x,\y-0.8) node[right] {$y$};
                    }
                \node[dot=magenta!100] at (1,0) {\color{white}$X$};
                \node[dot=cyan!100] at (1.5,0) {\color{white}$Z$};
                \node[dot=cyan!100] at (1.5,2) {\color{white}$Z$};
                \node[dot=cyan!100] at (1,1.5) {\color{white}$Z$};
                \node[dot=cyan!100] at (2,1.5) {\color{white}$Z$};
            \end{scope}
        \end{tikzpicture}
    }

    \vspace{0.2em}

    \subfigure[\label{fig:model_hierarchy}]{
        \begin{tikzpicture}[scale=0.59, every node/.style={scale=0.8}, dot/.style={circle, fill=#1, minimum size=6pt, inner sep=0}]

            \begin{scope}[local bounding box=Av,shift={(0,0)}]
                \draw[step=1cm, gray!45, very thin] (-0.4,-0.4) grid (2.4,4.9);
                \foreach \x/\y in {1/4} {
                        \draw[-latex] (\x,\y) to (\x-1.5,\y) node[above] {$x$};
                        \draw[-latex] (\x,\y) to (\x,\y-1.5) node[right] {$y$};
                    }
                \node[dot=magenta!100] at (0.5,4) {\color{white}$X$};
                \node[dot=magenta!100] at (1.5,4) {\color{white}$X$};
                \node[dot=magenta!100] at (1,4.5) {\color{white}$X$};
                \node[dot=magenta!100] at (1,3.5) {\color{white}$X$};

                \node[dot=cyan!100] at (0,0) {\color{white}$Z$};
                \node[dot=cyan!100] at (1,0) {\color{white}$Z$};
                \node[dot=cyan!100] at (0,1) {\color{white}$Z$};
                \node[dot=cyan!100] at (1,1) {\color{white}$Z$};
                \node[dot=cyan!100] at (0,4) {\color{white}$Z$};
                \node[dot=cyan!100] at (1,4) {\color{white}$Z$};
                \node[dot=cyan!100] at (0,3) {\color{white}$Z$};
                \node[dot=cyan!100] at (1,3) {\color{white}$Z$};

                \node[below=5mm,scale=1.4] at (1,-0.4) {$ A_v$};
            \end{scope}

            \begin{scope}[local bounding box=Bp,shift={(4,0)}]
                \draw[step=1cm, gray!45, very thin] (-0.4,-0.4) grid (2.4,4.4);
                \foreach \x/\y in {2/2} {
                        \draw[-latex] (\x,\y) to (\x-1.5,\y) node[above] {$x$};
                        \draw[-latex] (\x,\y) to (\x,\y-1.5) node[right] {$y$};
                    }
                \node[dot=cyan!100] at (1.5,1) {\color{white}$Z$};
                \node[dot=cyan!100] at (1,1.5) {\color{white}$Z$};
                \node[dot=cyan!100] at (1.5,2) {\color{white}$Z$};
                \node[dot=cyan!100] at (2,1.5) {\color{white}$Z$};

                \node[below=5mm,scale=1.4] at (1.5,-0.4) {$ B_p$};
            \end{scope}

            \begin{scope}[local bounding box=Cv,shift={(8,0)}]
                \draw[step=1cm, gray!45, very thin] (-0.4,-0.4) grid (2.4,4.4);
                \foreach \x/\y in {1/0} {
                        \draw[-latex] (\x,\y) to (\x-1.5,\y) node[above] {$x$};
                        \draw[-latex] (\x,\y) to (\x,\y-0.8) node[right] {$y$};
                    }
                \node[dot=magenta!100] at (1,0) {\color{white}$X$};
                \node[dot=cyan!100] at (1.5,0) {\color{white}$Z$};
                \node[dot=cyan!100] at (1.5,1) {\color{white}$Z$};
                \node[dot=cyan!100] at (2,3.5) {\color{white}$Z$};
                \node[dot=cyan!100] at (1,3.5) {\color{white}$Z$};

                \node[below=5mm,scale=1.4] at (1,-0.4) {$ C_v$};
            \end{scope}
        \end{tikzpicture}
    }

    \caption{Lattice stabilizers for three representative models, where Pauli $Z$ and $X$ operators are denoted by blue and magenta circles, respectively. Hamiltonian terms of  \textit{model (a), (b), (c)} are respectively shwon in Fig.~\ref{fig:model_fibonacci},~\ref{fig:model_tensor}~\ref{fig:model_hierarchy}. Each row shows the three stabilizer types for a specific choice of the decoration pattern $f(x,y)$.}
    \label{fig:Hamiltonian}
\end{figure}

\begin{figure}[htbp]
    \centering
    \centering
    \subfigure[\label{fig:sym_ver_1}]{\begin{tikzpicture}[
                scale=0.45,
                every node/.style={transform shape},
                x_dot/.style={
                        circle,
                        fill=teal,
                        text=white,
                        inner sep=0pt,
                        minimum size=0.9cm,
                        font=\large\bfseries
                    }
            ]

            \def\rows{6}

            \draw[step=1cm, lightgray!60, thin] (-0.5+2, -\rows-0.5) grid (\rows+0.5, 0.5);

            \foreach \y in {0,...,\rows} {
                    \foreach \x in {4} {
                            \pgfmathparse{int(mod(\y, 1))}
                            \ifnum\pgfmathresult=0
                                \node[x_dot] at (\x, -\y) {$X$};
                            \fi
                        }
                }

            \foreach \y in {0,...,\rows} {
                    \foreach \x in {3,5} {
                            \pgfmathparse{int(mod(\y, 1))}
                            \ifnum\pgfmathresult=0
                                \node[x_dot, fill opacity=0.3, text opacity=0.7] at (\x, -\y) {$X$};
                            \fi
                        }
                }

            \foreach \y in {0,...,\rows} {
                    \foreach \x in {2,6} {
                            \pgfmathparse{int(mod(\y, 1))}
                            \ifnum\pgfmathresult=0
                                \node[x_dot, fill opacity=0.1, text opacity=0.5] at (\x, -\y) {$X$};
                            \fi
                        }
                }

        \end{tikzpicture}}
    \subfigure[\label{fig:sym_ver_2}]{\begin{tikzpicture}[
                scale=0.45,
                every node/.style={transform shape},
                x_dot/.style={
                        circle,
                        fill=teal,
                        text=white,
                        inner sep=0pt,
                        minimum size=0.9cm,
                        font=\large\bfseries
                    }
            ]

            \def\rows{6}

            \draw[step=1cm, lightgray!60, thin] (-0.5+2, -\rows-0.5) grid (\rows+0.5, 0.5);

            \foreach \y in {0,...,\rows} {
                    \foreach \x in {4} {
                            \pgfmathparse{int(mod(\y, 2))}
                            \ifnum\pgfmathresult=0
                                \node[x_dot] at (\x, -\y) {$X$};
                            \fi
                        }
                }
            \foreach \y in {0,...,\rows} {
                    \foreach \x in {3,5} {
                            \pgfmathparse{int(mod(\y, 2))}
                            \ifnum\pgfmathresult=0
                                \node[x_dot,fill opacity=0.3, text opacity=0.7] at (\x, -\y) {$X$};
                            \fi
                        }
                }
            \foreach \y in {0,...,\rows} {
                    \foreach \x in {2,6} {
                            \pgfmathparse{int(mod(\y, 2))}
                            \ifnum\pgfmathresult=0
                                \node[x_dot,fill opacity=0.1, text opacity=0.5] at (\x, -\y) {$X$};
                            \fi
                        }
                }

        \end{tikzpicture}}
    \subfigure[\label{fig:sym_ver_3}]{\begin{tikzpicture}[
                scale=0.45,
                every node/.style={transform shape},
                x_dot/.style={
                        circle,
                        fill=teal,
                        text=white,
                        inner sep=0pt,
                        minimum size=0.9cm,
                        font=\large\bfseries
                    }
            ]

            \def\rows{6}

            \draw[step=1cm, lightgray!60, thin] (-0.5+2, -\rows-0.5) grid (\rows+0.5, 0.5);

            \foreach \y in {0,...,\rows} {
                    \foreach \x in {4} {
                            \pgfmathparse{int(mod(\y, 3))}
                            \ifnum\pgfmathresult=0
                                \node[x_dot] at (\x, -\y) {$X$};
                            \fi
                        }
                }
            \foreach \y in {0,...,\rows} {
                    \foreach \x in {3,5} {
                            \pgfmathparse{int(mod(\y, 3))}
                            \ifnum\pgfmathresult=0
                                \node[x_dot,fill opacity=0.3, text opacity=0.7] at (\x, -\y) {$X$};
                            \fi
                        }
                }
            \foreach \y in {0,...,\rows} {
                    \foreach \x in {2,6} {
                            \pgfmathparse{int(mod(\y, 3))}
                            \ifnum\pgfmathresult=0
                                \node[x_dot,fill opacity=0.1, text opacity=0.5] at (\x, -\y) {$X$};
                            \fi
                        }
                }

        \end{tikzpicture}}
    \subfigure[\label{fig:sym_hor_1}]{
        \begin{tikzpicture}[
                scale=0.45,
                every node/.style={transform shape},
                x_dot/.style={
                        circle,
                        fill=teal,
                        text=white,
                        inner sep=0pt,
                        minimum size=0.9cm,
                        font=\large\bfseries
                    }
            ]

            \def\rows{9}

            \draw[step=1cm, lightgray!60, thin] (-0.5, -\rows-0.5+5) grid (\rows+0.5, 0.5);

            \foreach \y in {2} {
                    \foreach \x in {0,...,\rows} {
                            \pgfmathparse{int(mod(\y, 1))}
                            \ifnum\pgfmathresult<3
                                \node[x_dot] at (\x, -\y) {$X$};
                            \fi
                        }
                }
            \foreach \y in {1,3} {
                    \foreach \x in {0,...,\rows} {
                            \pgfmathparse{int(mod(\y, 1))}
                            \ifnum\pgfmathresult<3
                                \node[x_dot,fill opacity=0.3, text opacity=0.7] at (\x, -\y) {$X$};
                            \fi
                        }
                }
            \foreach \y in {0,4} {
                    \foreach \x in {0,...,\rows} {
                            \pgfmathparse{int(mod(\y, 1))}
                            \ifnum\pgfmathresult<3
                                \node[x_dot,fill opacity=0.1, text opacity=0.5] at (\x, -\y) {$X$};
                            \fi
                        }
                }
        \end{tikzpicture}}

    \caption{Representative subsystem symmetry generators for the three models. Pauli-$X$ operators on vertices are shown with a finite segment for clarity, but extend infinitely. Translations by one lattice spacing—horizontal (a--c) or vertical (d)—also give valid generators.}
    \label{fig:symmetry_patterns}
\end{figure}

On a torus, the model exhibits $\mathbb{Z}_2$ topological order with a ground-state degeneracy of 4. The models feature standard toric code topological excitations: $e$-type (violating $A_v$) and $m$-type (violating $B_p$). From the perspective of symmetries, these models possess additional subsystem symmetries whose generators $S$ consist of Pauli-$X$ operators on the vertices. A typical symmetry generator can be expressed as: $
    S(f)=\prod_{v\in \supp (\text{Symmetry})} X_v,$
where $\supp(\text{Symmetry})$ denotes the support of a given subsystem symmetry generator (e.g., all vertices on a specific line or a fractal subset of the lattice).
The decoration pattern $f(x,y)$ acts effectively as a local constraint on the symmetry supports via higher-order cellular automata~\cite{PRXQuantum.5.030342}; thus, the spatial patterns of the subsystem symmetries can be tuned by choosing different $f(x,y)$. For the three examples introduced above, their subsystem symmetry supports are illustrated in Fig.~\ref{fig:symmetry_patterns}. \textit{Model (a)} (decorated by $f_1$) possesses subsystem symmetries along all vertical lines (Fig.~\ref{fig:sym_ver_1}), while \textit{Model (b)} (decorated by $f_2$) has subsystem symmetries along both horizontal (Fig.~\ref{fig:sym_hor_1}) and vertical (Fig.~\ref{fig:sym_ver_1}) lines. \textit{Model (c)} (decorated by $f_3$) has subsystem symmetries along all horizontal lines (Fig.~\ref{fig:sym_hor_1}) but two distinct types of vertical symmetry generators: one with Pauli-$X$ operators every two lattice spacings (Fig.~\ref{fig:sym_ver_2}), and another every three lattice spacings (Fig.~\ref{fig:sym_ver_3}).

While $f(x,y)$ can be adjusted to impose more exotic subsystem symmetry constraints, these three examples are sufficient to demonstrate how the interference of various subsystem symmetries gives rise to exotic fusion rules for anyon mobility.

\textit{Mobility Classes and Mobility Fusion.---}
Under the constraint of preserving all subsystem symmetries, anyon mobility in these models becomes restricted. The underlying mechanism is that each excitation must preserve its charge under all subsystem symmetry generators. In the model family discussed above, the $C_v$ terms effectively impose symmetry constraints on $m$-anyons. Moving an $m$-anyon requires Pauli-$X$ string operators on the edges, which may create additional $C_v$ excitations. The local generation and annihilation of a $C_v$ excitation involves a single Pauli-$Z$ operator on a vertex, which anticommutes with a set of subsystem symmetry generators. Thus, the mobility of an $m$-anyon is restricted by the condition that the string operator moving it must commute with all $C_v$ terms. In these models, $e$-anyons remain fully mobile under symmetry restrictions.

By uniquely assigning a vertex $(i,j)$ to each $m$-anyon at plaquette $(i+0.5,j+0.5)$, the condition above effectively requires that the number of $m$-anyons is conserved (mod 2) within every $\supp(\textrm{Symmetry})$.

As a simple example, in \textit{Model (a)} to be introduced shortly, where subsystem symmetries are vertical line-like, a single $m$-anyon can only move vertically while respecting all subsystem symmetries. However, two vertically adjacent $m$-anyons---forming a composite excitation after fusion---are fully mobile, since the number of $m$-anyons in every vertical line remains $0 \pmod 2$. For any configuration of $m$-anyons in this model family (denoted by $\mathfrak{m}(x,y)=\sum_{ij}m_{ij}x^iy^j \in \mathbb{F}_2[x,y,x^{-1},y^{-1}]$, where $m_{ij}=1$ corresponds to an $m$-anyon at plaquette $(i+0.5,j+0.5)$), we distinguish three mobility classes $\mathcal{M}(\mathfrak{m})$:
In this model family, the mobility class $\mathcal{M}$ of an excitation $\mathfrak{m}$ is characterized by the set of positions to which the excitation can be moved by a symmetry-respecting local operator.

The mobility class generally takes values from the following set:
$\mathcal{M}(\mathfrak{m})\in\{\alpha,\beta_{\mathbf v,T},\gamma\}.$
Here, fully mobile anyons ($\alpha$) move freely in all directions;
lineons ($\beta_{\mathbf{v},T}$) move only along one-dimensional subsystems in direction $\mathbf{v}=(a,b),~\gcd (a,b)=1$ with a minimal hopping period $T$, meaning that the minimal displacement is $(\pm aT,\pm bT)$;
and fractons ($\gamma$) are immobile without breaking the subsystem symmetries.

The example above explicitly shows that anyon fusion can significantly alter the mobility class due to the spatial interference of constraints carried by each individual anyon. Given two distinct excitations, the mobility of their fusion outcome is therefore naturally dependent on their relative positions. One may wonder whether there is a hidden algebraic structure beneath this mobility fusion process; the answer is affirmative.

We demonstrate that the \textit{mobility fusion algebra} arises from the following process: Each excitation configuration carries a distinct set of symmetry constraints depending on the positions of the constituent anyons. When two excitations fuse, their combined constraints depend sensitively on their relative positions: at some separations the constraints partially cancel, opening up new mobility channels; at others they reinforce, further restricting mobility.

Mobility fusion can be defined for both mobility classes and excitations.
Mobility fusion of mobility classes $\mathcal{M}_1$ and $\mathcal{M}_2$ is defined as
\begin{equation}\label{eq:mf_class}
    \mathcal{M}_1\times \mathcal{M}_2=\sum_{\mathcal{M}_{\lambda}\in\text{fusion channels}}\mathcal{M}_{\lambda},
\end{equation}
where $\mathcal{M}_\lambda$ is a valid fusion channel if and only if there exist $\mathfrak{m}_1, \mathfrak{m}_2$ satisfying $\mathcal{M}(\mathfrak{m}_1)=\mathcal{M}_1$ and $\mathcal{M}(\mathfrak{m}_2)=\mathcal{M}_2$, such that $\mathcal{M}(\mathfrak{m}_1+\mathfrak{m}_2)=\mathcal{M}_\lambda$.

Mobility fusion of excitations $\mathfrak{m}_1$ and $\mathfrak{m}_2$ is defined as
$
    \mathfrak{m}_1\times \mathfrak{m}_2 =\sum_{\mathcal{M}_{\lambda}\in\text{fusion channels}}\mathcal{M}_{\lambda}$,
where $\mathcal{M}_\lambda$ is summed over all channels such that there exists some displacement $x^i y^j$ (equivalently, they are placed at relative position $(i,j)$) with $\mathcal{M}(\mathfrak{m}_1+x^iy^j \mathfrak{m}_2)=\mathcal{M}_\lambda$.
Although the mobility class $\mathcal{M}(\mathfrak{m})$ of a specific excitation depends strongly on the decoration pattern $f(x,y)$, the fusion algebra of mobility classes exhibits universal properties independent of $f$, as illustrated by the typical examples below.

\begin{figure*}[t]
    \centering
    \centering
    \subfigure[\label{fib}]{
        \begin{tikzpicture}[scale=0.53,
            axis line/.style={
            -{Stealth[length=4pt, width=3pt]},
            gray!60,
            line width=0.5pt
            },
            label text/.style={font=\scriptsize, text=gray!60}
            ]
            \foreach \x in {-3,...,3} {
                    \foreach \y in {-3,...,3} {
                            \ifnum\x=0
                                \fill[mobilecolor] (\x-0.5, \y-0.5) rectangle (\x+0.5, \y+0.5);
                            \else
                                \fill[lineoncolor] (\x-0.5, \y-0.5) rectangle (\x+0.5, \y+0.5);
                            \fi
                        }
                }

            \draw[black!30, line width=0.3pt] (-3.5, -3.5) grid[step=1] (3.5, 3.5);

            \foreach \y in {-3,-2,-1,0,1,2,3} {
            \foreach \angle in {0, 90, 180, 270} {
            \draw[-{Stealth[length=3pt, width=4.5pt]}, line width=1.2pt, white] (0, \y) -- ++(\angle:0.35);
            }
            }
            \foreach \y in {-3,...,3} {
            \foreach \x in {-3,-2,-1,1,2,3} {
            \draw[{Stealth[length=3pt, width=4.5pt]}-{Stealth[length=3pt, width=4.5pt]}, line width=1.2pt, white]
            (\x, \y-0.3) -- (\x, \y+0.3);
            }
            }

            \draw[axis line] (-3.8, -3.5) -- (3.9, -3.5) node[right, text=black, font=\small] {$i$};
            \draw[axis line] (-3.5, -3.8) -- (-3.5, 3.9) node[above, text=black, font=\small] {$j$};

            \foreach \x in {-3,...,3} {
                    \draw[gray!60, line width=0.4pt] (\x, -3.5) -- (\x, -3.6);
                    \node[label text, below] at (\x, -3.65) {\x};
                }
            \foreach \y in {-3,...,3} {
                    \draw[gray!60, line width=0.4pt] (-3.5, \y) -- (-3.6, \y);
                    \node[label text, left] at (-3.65, \y) {\y};
                }

            \begin{scope}[shift={(-0.5, 4.1)}]
                \fill[white, opacity=0.9, rounded corners=2pt] (-2.6, -0.35) rectangle (1.0, 0.45);
                \draw[gray!50, line width=0.3pt, rounded corners=2pt] (-2.6, -0.35) rectangle (1.0, 0.45);

                \fill[lineoncolor] (-2.4, -0.15) rectangle (-2.05, 0.2);
                \draw[black!30, line width=0.2pt] (-2.4, -0.15) rectangle (-2.05, 0.2);
                \draw[{Stealth[length=1pt, width=3pt]}-{Stealth[length=1pt, width=3pt]}, line width=0.9pt, white] (-2.225, 0.125) -- (-2.225, -0.125);
                \node[right, font=\scriptsize] at (-1.95-0.15, 0.025) {$\beta_{y,1}$};

                \fill[mobilecolor] (0.1-0.8, -0.15) rectangle (0.45-0.8, 0.2);
                \draw[black!30, line width=0.2pt] (0.1-0.8, -0.15) rectangle (0.45-0.8, 0.2);
                \foreach \angle in {0, 90, 180, 270} {
                \draw[-{Stealth[length=1pt, width=3pt]}, line width=0.8pt, white] (0.275-0.8, 0.025) -- ++(\angle:0.16);
                }
                \node[right, font=\scriptsize] at (0.55-0.95, 0.025) {$\alpha$};
            \end{scope}

        \end{tikzpicture}
    }
    \subfigure[\label{ten1}]{
        \begin{tikzpicture}[scale=0.53,
            axis line/.style={
            -{Stealth[length=4pt, width=3pt]},
            gray!60,
            line width=0.5pt
            },
            label text/.style={font=\scriptsize, text=gray!60}
            ]

            \foreach \x in {-3,...,3} {
                    \foreach \y in {-3,...,3} {
                            \ifnum\y=0
                                \fill[lineoncolor] (\x-0.5, \y-0.5) rectangle (\x+0.5, \y+0.5);
                            \else
                                \fill[fractoncolor] (\x-0.5, \y-0.5) rectangle (\x+0.5, \y+0.5);
                            \fi
                        }
                }

            \draw[black!30, line width=0.3pt] (-3.5, -3.5) grid[step=1] (3.5, 3.5);

            \foreach \x in {-3,-2,-1,0,1,2,3} {
            \draw[{Stealth[length=3pt, width=4.5pt]}-{Stealth[length=3pt, width=4.5pt]}, line width=1.2pt, white]
            (\x, -0.3) -- (\x, 0.3);
            }

            \draw[axis line] (-3.8, -3.5) -- (3.9, -3.5) node[right, text=black, font=\small] {$i$};
            \draw[axis line] (-3.5, -3.8) -- (-3.5, 3.9) node[above, text=black, font=\small] {$j$};

            \foreach \x in {-3,...,3} {
                    \draw[gray!60, line width=0.4pt] (\x, -3.5) -- (\x, -3.6);
                    \node[label text, below] at (\x, -3.65) {\x};
                }
            \foreach \y in {-3,...,3} {
                    \draw[gray!60, line width=0.4pt] (-3.5, \y) -- (-3.6, \y);
                    \node[label text, left] at (-3.65, \y) {\y};
                }

            \begin{scope}[shift={(-0.5, 4.1)}]
                \fill[white, opacity=0.9, rounded corners=2pt] (-2.6, -0.35) rectangle (1, 0.45);
                \draw[gray!50, line width=0.3pt, rounded corners=2pt] (-2.6, -0.35) rectangle (1, 0.45);

                \fill[fractoncolor] (-2.4, -0.15) rectangle (-2.05, 0.2);
                \draw[black!30, line width=0.2pt] (-2.4, -0.15) rectangle (-2.05, 0.2);
                \node[right, font=\scriptsize] at (-1.95, 0.025) {$\gamma$};

                \fill[lineoncolor] (0.1-1, -0.15) rectangle (0.45-1, 0.2);
                \draw[black!30, line width=0.2pt] (0.1-1, -0.15) rectangle (0.45-1, 0.2);
                \draw[{Stealth[length=1pt, width=3pt]}-{Stealth[length=1pt, width=3pt]}, line width=0.9pt, white] (0.275-1, -0.12) -- (0.275-1, 0.17);
                \node[right, font=\scriptsize] at (0.55-1, 0.025) {$\beta_{y,1}$};
            \end{scope}

        \end{tikzpicture}}
    \subfigure[\label{ten2}]{
        \begin{tikzpicture}[scale=0.53,
            vecfield/.style={
            {Stealth[length=3pt, width=4.5pt]}-{Stealth[length=3pt, width=4.5pt]},
            line width=1.2pt,
            white
            },
            vecfieldmobile/.style={
            -{Stealth[length=3pt, width=4.5pt]},
            line width=1.2pt,
            white
            },
            axis line/.style={
            -{Stealth[length=4pt, width=3pt]},
            gray!60,
            line width=0.5pt
            },
            label text/.style={font=\scriptsize, text=gray!60}
            ]

            \foreach \x in {-3,...,3} {
                    \foreach \y in {-3,...,3} {
                            \pgfmathtruncatemacro{\isXzero}{\x == 0 ? 1 : 0}
                            \pgfmathtruncatemacro{\isYzero}{\y == 0 ? 1 : 0}

                            \ifnum\isXzero=1
                                \ifnum\isYzero=1
                                    \fill[mobilecolor] (\x-0.5, \y-0.5) rectangle (\x+0.5, \y+0.5);
                                \else
                                    \fill[lineoncolor] (\x-0.5, \y-0.5) rectangle (\x+0.5, \y+0.5);
                                \fi
                            \else
                                \ifnum\isYzero=1
                                    \fill[lineoncolor] (\x-0.5, \y-0.5) rectangle (\x+0.5, \y+0.5);
                                \else
                                    \fill[fractoncolor] (\x-0.5, \y-0.5) rectangle (\x+0.5, \y+0.5);
                                \fi
                            \fi
                        }
                }

            \draw[black!30, line width=0.3pt] (-3.5, -3.5) grid[step=1] (3.5, 3.5);

            \foreach \x in {-3,-2,-1,1,2,3} {
                    \draw[vecfield] (\x-0.3, 0) -- (\x+0.3, 0);
                }
            \foreach \y in {-3,-2,-1,1,2,3} {
                    \draw[vecfield] (0, \y-0.3) -- (0, \y+0.3);
                }

            \foreach \angle in {0, 90, 180, 270} {
                    \draw[vecfieldmobile] (0,0) -- ++(\angle:0.35);
                }

            \draw[axis line] (-3.8, -3.5) -- (3.9, -3.5) node[right, text=black, font=\small] {$i$};
            \draw[axis line] (-3.5, -3.8) -- (-3.5, 3.9) node[above, text=black, font=\small] {$j$};

            \foreach \x in {-3,...,3} {
                    \draw[gray!60, line width=0.4pt] (\x, -3.5) -- (\x, -3.6);
                    \node[label text, below] at (\x, -3.65) {\x};
                }
            \foreach \y in {-3,...,3} {
                    \draw[gray!60, line width=0.4pt] (-3.5, \y) -- (-3.6, \y);
                    \node[label text, left] at (-3.65, \y) {\y};
                }

            \begin{scope}[shift={(-0.1, 4.1)}]
                \fill[white, opacity=0.9, rounded corners=2pt] (-3.1, -0.35) rectangle (3.1, 0.45);
                \draw[gray!50, line width=0.3pt, rounded corners=2pt] (-3.1, -0.35) rectangle (3.1, 0.45);

                \fill[fractoncolor] (-2.9, -0.15) rectangle (-2.55, 0.2);
                \draw[black!30, line width=0.2pt] (-2.9, -0.15) rectangle (-2.55, 0.2);
                \node[right, font=\scriptsize] at (-2.45-0.15, 0.025) {$\gamma$};

                \fill[lineoncolor] (-1.0-0.5-0.2, -0.15) rectangle (-0.65-0.5-0.2, 0.2);
                \draw[black!30, line width=0.2pt] (-1.0-0.5-0.2, -0.15) rectangle (-0.65-0.5-0.2, 0.2);
                \draw[{Stealth[length=1pt, width=3pt]}-{Stealth[length=1pt, width=3pt]}, line width=0.9pt, white] (-0.97-0.5-0.2, 0.025) -- (-0.68-0.5-0.2, 0.025);
                \node[right, font=\scriptsize] at (-0.55-0.5-0.15-0.2, 0.025) {$\beta_{x,1}$};

                \fill[lineoncolor] (0.8-0.5-0.2, -0.15) rectangle (1.15-0.5-0.2, 0.2);
                \draw[black!30, line width=0.2pt] (0.8-0.5-0.2, -0.15) rectangle (1.15-0.5-0.2, 0.2);
                \draw[{Stealth[length=1pt, width=3pt]}-{Stealth[length=1pt, width=3pt]}, line width=0.9pt, white] (0.975-0.5-0.2, -0.12) -- (0.975-0.5-0.2, 0.17);
                \node[right, font=\scriptsize] at (1.25-0.5-0.15-0.2, 0.025) {$\beta_{y,1}$};

                \fill[mobilecolor] (2.5-0.5-0.2, -0.15) rectangle (2.85-0.5-0.2, 0.2);
                \draw[black!30, line width=0.2pt] (2.5-0.5-0.2, -0.15) rectangle (2.85-0.5-0.2, 0.2);
                \foreach \angle in {0, 90, 180, 270} {
                \draw[-{Stealth[length=1pt, width=3pt]}, line width=0.8pt, white] (2.675-0.5-0.2, 0.025) -- ++(\angle:0.16);
                }
                \node[right, font=\scriptsize] at (2.95-0.5-0.15-0.2, 0.025) {$\alpha$};
            \end{scope}

        \end{tikzpicture}
    }

    \par\medskip
    \subfigure[\label{hier1}]{
        \begin{tikzpicture}[scale=0.53,
            axis line/.style={
            -{Stealth[length=4pt, width=3pt]},
            gray!60,
            line width=0.5pt
            },
            label text/.style={font=\scriptsize, text=gray!60}
            ]

            \foreach \x in {-3,...,3} {
                    \foreach \y in {-3,...,3} {
                            \fill[lineoncolor] (\x-0.5, \y-0.5) rectangle (\x+0.5, \y+0.5);
                        }
                }

            \draw[black!30, line width=0.3pt] (-3.5, -3.5) grid[step=1] (3.5, 3.5);

            \foreach \x in {-3,-2,-1,0,1,2,3} {
            \foreach \y in {-3,...,3}{
            \draw[{Stealth[length=3pt, width=4.5pt]}-{Stealth[length=3pt, width=4.5pt]}, line width=1.2pt, white]
            (\x, \y-0.3) -- (\x, \y+0.3);
            }
            }

            \draw[axis line] (-3.8, -3.5) -- (3.9, -3.5) node[right, text=black, font=\small] {$i$};
            \draw[axis line] (-3.5, -3.8) -- (-3.5, 3.9) node[above, text=black, font=\small] {$j$};

            \foreach \x in {-3,...,3} {
                    \draw[gray!60, line width=0.4pt] (\x, -3.5) -- (\x, -3.6);
                    \node[label text, below] at (\x, -3.65) {\x};
                }
            \foreach \y in {-3,...,3} {
                    \draw[gray!60, line width=0.4pt] (-3.5, \y) -- (-3.6, \y);
                    \node[label text, left] at (-3.65, \y) {\y};
                }

            \begin{scope}[shift={(-0.5, 4.1)}]
                \fill[white, opacity=0.9, rounded corners=2pt] (-2.6, -0.35) rectangle (-0.5, 0.45);
                \draw[gray!50, line width=0.3pt, rounded corners=2pt] (-2.6, -0.35) rectangle (-0.5, 0.45);

                \fill[lineoncolor] (0.1-2.5, -0.15) rectangle (0.45-2.5, 0.2);
                \draw[black!30, line width=0.2pt] (0.1-2.5, -0.15) rectangle (0.45-2.5, 0.2);
                \draw[{Stealth[length=1pt, width=3pt]}-{Stealth[length=1pt, width=3pt]}, line width=0.9pt, white] (0.275-2.5, -0.12) -- (0.275-2.5, 0.17);
                \node[right, font=\scriptsize] at (0.55-2.5, 0.025) {$\beta_{y,6}$};
            \end{scope}

        \end{tikzpicture}}
    \subfigure[\label{hier2}]{
        \begin{tikzpicture}[scale=0.53,
            axis line/.style={
            -{Stealth[length=4pt, width=3pt]},
            gray!60,
            line width=0.5pt
            },
            label text/.style={font=\scriptsize, text=gray!60}
            ]

            \foreach \x in {-3,...,3} {
                    \foreach \y in {-3,...,3} {
                            \pgfmathtruncatemacro{\isXzero}{\x == 0 ? 1 : 0}
                            \pgfmathtruncatemacro{\absY}{abs(\y)}

                            \ifnum\isXzero=1
                                \ifnum\y=0
                                    \fill[mobilecolor] (\x-0.5, \y-0.5) rectangle (\x+0.5, \y+0.5);
                                \else
                                    \ifnum\absY=3
                                        \fill[lineoncolor] (\x-0.5, \y-0.5) rectangle (\x+0.5, \y+0.5);
                                    \else
                                        \fill[lineoncolor!60!fractoncolor] (\x-0.5, \y-0.5) rectangle (\x+0.5, \y+0.5);
                                    \fi
                                \fi
                            \else
                                \fill[lineoncolor!20!fractoncolor] (\x-0.5, \y-0.5) rectangle (\x+0.5, \y+0.5);
                            \fi
                        }
                }

            \draw[black!30, line width=0.3pt] (-3.5, -3.5) grid[step=1] (3.5, 3.5);

            \foreach \y in {-3,...,3} {
            \foreach \x in {-3,-2,-1,1,2,3} {
            \draw[{Stealth[length=3pt, width=4.5pt]}-{Stealth[length=3pt, width=4.5pt]}, line width=1.2pt, white]
            (\x, \y-0.3) -- (\x, \y+0.3);
            }
            }

            \foreach \x in {0} {
            \foreach \y in {-3,-2,-1,1,2,3} {
            \draw[{Stealth[length=3pt, width=4.5pt]}-{Stealth[length=3pt, width=4.5pt]}, line width=1.2pt, white]
            (\x, \y-0.3) -- (\x, \y+0.3);
            }
            }

            \foreach \angle in {0, 90, 180, 270} {
            \draw[-{Stealth[length=3pt, width=4.5pt]}, line width=1.2pt, white] (0,0) -- ++(\angle:0.35);
            }

            \draw[axis line] (-3.8, -3.5) -- (3.9, -3.5) node[right, text=black, font=\small] {$i$};
            \draw[axis line] (-3.5, -3.8) -- (-3.5, 3.9) node[above, text=black, font=\small] {$j$};

            \foreach \x in {-3,...,3} {
                    \draw[gray!60, line width=0.4pt] (\x, -3.5) -- (\x, -3.6);
                    \node[label text, below] at (\x, -3.65) {\x};
                }
            \foreach \y in {-3,...,3} {
                    \draw[gray!60, line width=0.4pt] (-3.5, \y) -- (-3.6, \y);
                    \node[label text, left] at (-3.65, \y) {\y};
                }

            \begin{scope}[shift={(0.1, 4.1)}]
                \fill[white, opacity=0.9, rounded corners=2pt] (-3.3, -0.35) rectangle (3,0.45);
                \draw[gray!50, line width=0.3pt, rounded corners=2pt] (-3.3, -0.35) rectangle (3, 0.45);

                \fill[lineoncolor] (-3.1, -0.15) rectangle (-2.75, 0.2);
                \draw[black!30, line width=0.2pt] (-3.1, -0.15) rectangle (-2.75, 0.2);
                \draw[{Stealth[length=1pt, width=3pt]}-{Stealth[length=1pt, width=3pt]}, line width=0.9pt, white] (-2.925, -0.12) -- (-2.925, 0.17);
                \node[right, font=\scriptsize] at (-2.8, 0.025) {$\beta_{y,1}$};

                \fill[lineoncolor!60!fractoncolor] (-1.5, -0.15) rectangle (-1.15, 0.2);
                \draw[black!30, line width=0.2pt] (-1.5, -0.15) rectangle (-1.15, 0.2);
                \draw[{Stealth[length=1pt, width=3pt]}-{Stealth[length=1pt, width=3pt]}, line width=0.9pt, white] (-1.325, -0.12) -- (-1.325, 0.17);
                \node[right, font=\scriptsize] at (-1.25, 0.025) {$\beta_{y,3}$};

                \fill[lineoncolor!20!fractoncolor] (0.1, -0.15) rectangle (0.45, 0.2);
                \draw[black!30, line width=0.2pt] (0.1, -0.15) rectangle (0.45, 0.2);
                \draw[{Stealth[length=1pt, width=3pt]}-{Stealth[length=1pt, width=3pt]}, line width=0.9pt, white] (0.275, -0.12) -- (0.275, 0.17);
                \node[right, font=\scriptsize] at (0.35, 0.025) {$\beta_{y,6}$};

                \fill[mobilecolor] (1.7, -0.15) rectangle (2.05, 0.2);
                \draw[black!30, line width=0.2pt] (1.7, -0.15) rectangle (2.05, 0.2);
                \foreach \angle in {0, 90, 180, 270} {
                \draw[-{Stealth[length=1pt, width=3pt]}, line width=0.8pt, white] (1.875, 0.025) -- ++(\angle:0.16);
                }
                \node[right, font=\scriptsize] at (2.15, 0.025) {$\alpha$};

            \end{scope}

        \end{tikzpicture}}
    \subfigure[\label{hier3}]{
        \begin{tikzpicture}[scale=0.53,
            axis line/.style={
            -{Stealth[length=4pt, width=3pt]},
            gray!60,
            line width=0.5pt
            },
            label text/.style={font=\scriptsize, text=gray!60}
            ]

            \foreach \x in {-3,...,3} {
                    \foreach \y in {-3,...,3} {
                            \pgfmathtruncatemacro{\isXzero}{\x == 0 ? 1 : 0}
                            \pgfmathtruncatemacro{\isYzero}{\y == 0 ? 1 : 0}
                            \pgfmathtruncatemacro{\absY}{abs(\y)}

                            \ifnum\isXzero=1
                                \ifnum\isYzero=1
                                    \fill[lineoncolor] (\x-0.5, \y-0.5) rectangle (\x+0.5, \y+0.5);
                                \else
                                    \ifnum\absY=3
                                        \fill[lineoncolor] (\x-0.5, \y-0.5) rectangle (\x+0.5, \y+0.5);
                                    \else
                                        \fill[lineoncolor!20!fractoncolor] (\x-0.5, \y-0.5) rectangle (\x+0.5, \y+0.5);
                                    \fi
                                \fi
                            \else
                                \fill[lineoncolor!20!fractoncolor] (\x-0.5, \y-0.5) rectangle (\x+0.5, \y+0.5);
                            \fi
                        }
                }

            \draw[black!30, line width=0.3pt] (-3.5, -3.5) grid[step=1] (3.5, 3.5);

            \foreach \x in {-3,...,3} {
            \foreach \y in {-3,...,3} {
            \draw[{Stealth[length=3pt, width=4.5pt]}-{Stealth[length=3pt, width=4.5pt]}, line width=1.2pt, white]
            (\x, \y-0.3) -- (\x, \y+0.3);
            }
            }

            \draw[axis line] (-3.8, -3.5) -- (3.9, -3.5) node[right, text=black, font=\small] {$i$};
            \draw[axis line] (-3.5, -3.8) -- (-3.5, 3.9) node[above, text=black, font=\small] {$j$};

            \foreach \x in {-3,...,3} {
                    \draw[gray!60, line width=0.4pt] (\x, -3.5) -- (\x, -3.6);
                    \node[label text, below] at (\x, -3.65) {\x};
                }
            \foreach \y in {-3,...,3} {
                    \draw[gray!60, line width=0.4pt] (-3.5, \y) -- (-3.6, \y);
                    \node[label text, left] at (-3.65, \y) {\y};
                }

            \begin{scope}[shift={(-1.3, 4.1)}]
                \fill[white, opacity=0.9, rounded corners=2pt] (-1.8, -0.35) rectangle (1.6, 0.45);
                \draw[gray!50, line width=0.3pt, rounded corners=2pt] (-1.8, -0.35) rectangle (1.6, 0.45);

                \fill[lineoncolor] (-1.6, -0.15) rectangle (-1.25, 0.2);
                \draw[black!30, line width=0.2pt] (-1.6, -0.15) rectangle (-1.25, 0.2);
                \draw[{Stealth[length=1pt, width=3pt]}-{Stealth[length=1pt, width=3pt]}, line width=0.9pt, white] (-1.425, -0.12) -- (-1.425, 0.17);
                \node[right, font=\scriptsize] at (-1.35, 0.025) {$\beta_{y,2}$};

                \fill[lineoncolor!20!fractoncolor] (0.0, -0.15) rectangle (0.35, 0.2);
                \draw[black!30, line width=0.2pt] (0.0, -0.15) rectangle (0.35, 0.2);
                \draw[{Stealth[length=1pt, width=3pt]}-{Stealth[length=1pt, width=3pt]}, line width=0.9pt, white] (0.175, -0.12) -- (0.175, 0.17);
                \node[right, font=\scriptsize] at (0.25, 0.025) {$\beta_{y,6}$};
            \end{scope}
        \end{tikzpicture}}
    \caption{Mobility fusion maps showing the fusion outcome for two $m$-particles at \textit{relative position}  $(i,j)$. Pink with double arrows = fully mobile ($\alpha$), purple with arrows = lineon ($\beta$), and dark grey without arrows = fracton ($\gamma$). The period of lineons is shown in the legends above each figure. (a) Mobility map of [\textit{Model (a)}] with $\mathfrak{m}_1=\mathfrak{m}_2=1$; (b-c) Mobility maps of [\textit{Model (b)}].  $\mathfrak{m}_1=\mathfrak{m}_2=1$ for Fig.~\ref{ten1}, and $\mathfrak{m}_1=1+xy,~\mathfrak{m}_2=x+y$ for Fig.~\ref{ten2};  (d-f) Mobility map of [\textit{Model (c)}]. $\mathfrak{m}_1=(1+x)(1+y+y^2),~\mathfrak{m}_2=(1+x)(1+y^2)$ for Fig.~\ref{hier1}, and $\mathfrak{m}_1=\mathfrak{m}_2=1+x$ for Fig.~\ref{hier2}, and $\mathfrak{m}_1=1+x,~\mathfrak{m}_2=(1+x)(1+x+xy+xy^2)$ for Fig.~\ref{hier3}.}
    \label{fig:mobility_map}
\end{figure*}

We now explicitly present the mobility fusion algebra for \textit{Models (a)}--\textit{(c)}. In the examples below, the mobility fusion of mobility classes is algebraically computed. Each fusion rule is demonstrated by one or two examples of excitation mobility fusion.

    {\color{blue}\textbf{\textit{Model Example (a): Fibonacci Fusion Rules---}}}
For \textit{Model (a)}, decorated by $f_1(x,y) = 1+y$, there are a total of two mobility classes: fully mobile anyons $\alpha$ and lineons $\beta_{y,1}$, which move along the $y$-axis with a period of 1. Remarkably, these mobility classes for Abelian anyons respect the fusion rules whose mathematical expressions are exactly the same as the standard fusion rules for Fibonacci anyons (i.e., $\alpha\leftrightarrow \mathbb{I}$ and $\beta_{y,1}\leftrightarrow \tau$):
\begin{subequations}
    \begin{align}
        \alpha\times \alpha            & =\alpha                                 \\
        \alpha\times\beta_{y,1}        & =\beta_{y,1}                            \\
        \beta_{y,1} \times \beta_{y,1} & = \alpha + \beta_{y,1}. \label{eq:fib1}
    \end{align}
\end{subequations}
The multi-channel fusion structure arises from the position-dependent interference of symmetry constraints. While it is impossible to show every fusion process between two mobility classes, we can explicitly demonstrate this phenomenon by selecting two representative excitations, varying their relative position $(i,j)$, and calculating the mobility of the resulting composite. The result is pictorially represented by a \textit{mobility map}, from which the mobility $\mathcal{M}(\mathfrak{m}_1+x^iy^j \mathfrak{m}_2)$ can be read at position $(i,j)$, where $\mathfrak{m}_2$ is placed at relative position $(i,j)$ from $\mathfrak{m}_1$. The mobility map illustrating Eq.~\eqref{eq:fib1} with $\mathfrak{m}_1=\mathfrak{m}_2=1$ is shown in Fig.~\ref{fib}.

{\color{blue}\textbf{\textit{Model Example (b): Tensor Product of Fibonacci Fusion Rules.---}}} For \textit{Model (b)}, decorated by $f_2(x,y) = (1+x)(1+y)$, there are four mobility classes: fully mobile $\alpha$, horizontal lineons $\beta_{x,1}$, vertical lineons $\beta_{y,1}$, and fractons $\gamma$. The fusion rules are:
\begin{subequations}
    \begin{align}
         & \beta_{x,1} \times \beta_{x,1} = \alpha + \beta_{x,1}, \quad \beta_{y,1} \times \beta_{y,1} = \alpha + \beta_{y,1}, \\
         & \beta_{x,1} \times \beta_{y,1} = \gamma,                                                                            \\
         & \beta_{x,1} \times \gamma = \beta_{y,1} + \gamma, \quad \beta_{y,1} \times \gamma = \beta_{x,1} + \gamma,           \\
         & \gamma \times \gamma = \alpha + \beta_{x,1} + \beta_{y,1} + \gamma.
    \end{align}
\end{subequations}

A horizontal lineon and a vertical lineon, when fused, inherit the mobility constraints of both directions---and since they cannot simultaneously satisfy the constraints needed for motion in either direction, the composite becomes immobile (a fracton).

This algebra mimics the tensor product of two Fibonacci fusion categories, under the identification $\alpha\leftrightarrow(1,1)$, $\beta_{x,1}\leftrightarrow(1,\tau)$, $\beta_{y,1}\leftrightarrow (\tau,1)$, and $\gamma \leftrightarrow(\tau,\tau)$.

Two typical mobility maps for this model are presented in Fig.~\ref{ten1} ($\mathfrak{m}_1=\mathfrak{m}_2=1$) and Fig.~\ref{ten2} ($\mathfrak{m}_1=1+xy,~\mathfrak{m}_2=x+y$).

{\color{blue}\textbf{\textit{Model Example (c): Lineon Period Transmutation.---}}} For \textit{Model (c)}, decorated by $f_3(x,y) = (1+x)(1+y^2)(1+y+y^2)$, the symmetry structure is richer, yielding seven distinct mobility sectors:
$\mathcal{M}\in\{\alpha,\beta_{x,1},\beta_{y,1},\beta_{y,2},\beta_{y,3},\beta_{y,6},\gamma\}$.
The lineon periods vary, and fusion leads to intricate period transmutation. When fusing $\beta_{y,2}$ and $\beta_{y,3}$ (lineons with periods $T=2$ and $T=3$), the outcome is unique:
\begin{equation}
    \beta_{y,2} \times \beta_{y,3} = \beta_{y,6},
\end{equation}
where the composite must satisfy the constraints of both input lineons. Since the periods $T=2$ and $T=3$ correspond to independent symmetry constraints, the composite inherits the least common multiple (LCM) of the periods, $T=6$. An example demonstrating this with $\mathfrak{m}_1=(1+x)(1+y+y^2)$ and $\mathfrak{m}_2=(1+x)(1+y^2)$ is shown in Fig.~\ref{hier1}.

Conversely, fusing two $\beta_{y,6}$ lineons causes the lineon periods to split into multiple channels:
\begin{equation}\label{lineon6}
    \beta_{y,6} \times \beta_{y,6} = \alpha + \beta_{y,1} +\beta_{y,2}+ \beta_{y,3} + \beta_{y,6}.
\end{equation}
Unlike the previous case, here the constraints are generally not independent---both lineons share the same underlying symmetry structure. At certain relative positions and excitation configurations, partial cancellation of symmetry constraints can reduce the effective period. In general, two lineons with periods $T_1$ and $T_2$ can fuse into a lineon with period $T$, where $T$ divides $\operatorname{lcm}(T_1, T_2)$; the specific fusion channels depend on the detailed interference of symmetry charges at different relative positions. An example with $\mathfrak{m}_1=\mathfrak{m}_2=1+x$ (both belong to $\beta_{y,6}$ channel) is shown in Fig.~\ref{hier2}. Note that the fusion process of two excitations above does NOT generate a $\beta_{y,2}$ channel, whose generation requires adjusting $\mathfrak{m}$. Another example with $\mathfrak{m}_1=1+x,~\mathfrak{m}_2=(1+x)(1+x+xy+xy^2)$ (both belong to $\beta_{y,6}$ channel) generates $\beta_{y,2}$ and $\beta_{y,6}$ channels but not other channels, whose mobility map is shown in Fig.~\ref{hier3}. These two examples cover the full fusion channels of Eq.~\eqref{lineon6}. Detailed calculation of the fusion maps above is done by computer, but there are algebraic rules to rigorously compute the fusion rules of mobility classes, as shown in the supplemental material \cite{sm}.

\textit{Discussion.---}
In this Letter, we uncovered a hidden multi-channel structure within subsystem symmetry-enriched topological phases. We demonstrated that while the topological charges of these excitations remain Abelian, their mobility classes obey complex, multi-channel fusion rules. The decoration polynomial $f(x,y)$ maps these geometric constraints into a precise algebraic framework, elevating mobility fusion from a lattice-specific phenomenon to a universal mathematical structure. This opens two primary avenues for future research.

Theoretically, this polynomial framework naturally extends to a broader range of stabilizer codes, including fracton models. The period transmutation phenomena observed here strongly suggest an underlying connection to algebraic number theory that warrants formal proof in wider class of models.

On the quantum information front, our polynomial framework shares a foundational mathematical structure with modern quantum low-density parity-check (qLDPC) codes~\cite{chen2025,FT3}, whose physical realizations are now at the forefront of experimental quantum error correction~\cite{GoogleQuantumAI2023,Bluvstein2024}. Engineering $f(x,y)$ to artificially restrict anyon mobility to specific lineon or fracton sub-manifolds intrinsically confines the spatial propagation of error syndromes, providing a systematic design pathway for fault-tolerant architectures. Finally, probing these models via subdimensional entanglement entropy~\cite{li2026subdimensionalentanglemententropygeometrictopological} will help disentangle the distinct physical contributions of topological order and subsystem symmetry to the restricted mobility.

\emph{Acknowledgments}---We thank Y.-A. Chen, M.-Y. Li, Y. Zhou, X.-Y. Huang, Y.-T. Hu, R.-J. Guo for helpful discussions.
This work was in part supported by the National Natural Science Foundation of China (NSFC) under Grants No. 12474149, Research Center for Magnetoelectric Physics of Guangdong Province under Grant No. 2024B0303390001, and Guangdong Provincial Key Laboratory of Magnetoelectric Physics and Devices under Grant No. 2022B1212010008.

%

\appendix
\onecolumngrid
\makeatletter
\renewcommand{\thesection}{\Roman{section}}          
\renewcommand{\thesubsection}{\Alph{subsection}}     
\renewcommand{\thesubsubsection}{\arabic{subsubsection}}  
\renewcommand{\theparagraph}{\arabic{paragraph}}     
\renewcommand{\thesubparagraph}{\arabic{subparagraph}}

\titleformat{\section}[block]
{\centering\normalfont\large\bfseries\MakeUppercase}
{\thesection.}
{1em}
{}
\titlespacing*{\section}
{0pt}{3.5ex plus 1ex minus .2ex}{2.3ex plus .2ex}

\titleformat{\subsection}[block]
{\normalfont\bfseries}
{\thesubsection.}
{1em}
{}
\titlespacing*{\subsection}
{0pt}{3.25ex plus 1ex minus .2ex}{1.5ex plus .2ex}

\titleformat{\subsubsection}[block]
{\normalfont\itshape}
{\thesubsubsection.}
{1em}
{}
\titlespacing*{\subsubsection}
{0pt}{3.25ex plus 1ex minus .2ex}{1.5ex plus .2ex}
\makeatother

\section{Algebraic Setup: Operator Formalism and Model Constructions}
\label{sec:algebraic-setup}

In this section, we present the algebraic framework that underpins the exactly solvable models discussed in the main text, and we provide a rigorous description of the symmetry constraints on excitations.

\subsection{Polynomial representation of Pauli operators}
\label{sec:polynomial-rep}

We consider a two-dimensional square lattice where each unit cell contains $Q=3$ qubits: one on the vertex and two on the edges (horizontal and vertical). Following Ref.\ \cite{haah_commuting_2013}, any translation-invariant Pauli operator can be encoded as a $2Q$-component vector of Laurent polynomials in variables $x$ and $y$ over the field $\mathbb{F}_2=\{0,1\}$. Explicitly,
\begin{equation}\label{eq:poly-rep}
    \mathcal{O}=
    \begin{pmatrix}
        \mathbf{f}_1(x,y) \\
        \mathbf{f}_2(x,y) \\
        \mathbf{f}_3(x,y) \\ \hline
        \mathbf{g}_1(x,y) \\
        \mathbf{g}_2(x,y) \\
        \mathbf{g}_3(x,y)
    \end{pmatrix},
\end{equation}
where the upper three components specify the locations of Pauli $X$ operators on the three qubit types (vertex, horizontal edge, vertical edge) and the lower three components specify the locations of Pauli $Z$ operators. In this representation, multiplication of operators corresponds to addition of their polynomial vectors (modulo $2$). The commutation relation between two operators $O_1, O_2$ is determined by the symplectic product
\begin{equation}
    \mathcal{O}_1 \cdot \mathcal{O}_2 \equiv \mathrm{tr}\big(\mathcal{O}_1^\dagger \lambda_Q \mathcal{O}_2\big) = 0 \;\Longleftrightarrow\; [O_1,O_2]=0,
\end{equation}
where $\lambda_Q$ is the standard $2Q\times 2Q$ symplectic matrix and $\mathcal{O}^\dagger(x,y) = \big(\,\overline{\mathbf{f}}(x,y)\;|\;\overline{\mathbf{g}}(x,y)\,\big)$ with $\overline{f}(x,y)=f(x^{-1},y^{-1})$. The trace $\mathrm{tr}$ extracts the constant term.

\subsection{Hamiltonian and the decoration polynomial}
\label{sec:Hamiltonian}

The family of models introduced in the main text is defined by the Hamiltonian
\begin{equation}
    H = -\sum_v A_v - \sum_p B_p - \sum_v C_v,
\end{equation}
whose terms are specified through a decoration polynomial $f(x,y)\in \mathbb{F}_2[x^{\pm1},y^{\pm1}]$.
The vertex term $A_v$ and the plaquette term $B_p$ have the general form
\begin{equation}
    \mathcal{A}_v =
    \begin{pmatrix}
        0      \\ 1+\bar x \\ 1+\bar y\\ \hline
        f(x,y) \\0\\0
    \end{pmatrix},\qquad
    \mathcal{B}_p =
    \begin{pmatrix}
        0 \\0\\0\\ \hline
        0 \\ 1+y \\ 1+x
    \end{pmatrix}.
\end{equation}
$A_v$ contains Pauli $X$ on the four edges adjacent to $v$ and Pauli $Z$ on a set of nearby vertices determined by $f(x,y)$; $B_p$ is the standard toric-code plaquette operator.

We first introduce the following decomposition lemma, which is central to constructing the $C_v$ term:
\begin{lemma}\label{lemma1}
    For any $i_0, j_0 \ge 0$, define $P_0(x,y)=\sum_{k=0}^{i_0-1} x^k$ and $Q_0(x,y)=x^{i_0}\sum_{k=0}^{j_0-1} y^k$. Then
    \begin{equation}
        (1+x)P_0 + (1+y)Q_0 = 1 + x^{i_0}y^{j_0}.
    \end{equation}
    Consequently, for any Laurent polynomial $f(x,y) \in \ff$ whose non-zero coefficients can be paired such that each pair's lattice positions are comparable in the product order (i.e., one position lies to the lower-left of the other), there exist $P,Q\in\ff$ satisfying $f = (1+x)P + (1+y)Q$. All models in this work satisfy this condition.
\end{lemma}
\begin{proof}
    The first identity follows by direct expansion over $\mathbb{F}_2$:
    \begin{equation}
        (1+x)P_0 + (1+y)Q_0 = (1+x^{i_0}) + x^{i_0}(1+y^{j_0}) = 1 + x^{i_0}y^{j_0}.
    \end{equation}
    For the second claim, pair the non-zero terms of $f$. For a pair $x^{i_1}y^{j_1} + x^{i_2}y^{j_2}$ with $i_1\le i_2$ and $j_1\le j_2$, factor out $x^{i_1}y^{j_1}$ to obtain $x^{i_1}y^{j_1}(1 + x^{i_2-i_1}y^{j_2-j_1})$. Apply the first identity and multiply through by $x^{i_1}y^{j_1}$. Summing over all pairs yields $P = \sum P_0$ and $Q = \sum Q_0$.
\end{proof}

To construct a third vertex term $C_v$ that commutes with all $A_v$ and $B_p$, we use the decomposition
\begin{equation}\label{eq:f-decomp}
    f(x,y) = (1+x)P(x,y) + (1+y)Q(x,y),
\end{equation}
which always exists provided $f$ has an even number of non-zero coefficients (satisfied by all models below). We require $\gcd(P,Q)=1$ to avoid spurious mobility.
With this decomposition, a valid $C_v$ term is
\begin{equation}
    \mathcal{C}_v =
    \begin{pmatrix}
        1 \\0\\0\\ \hline
        0 \\ \bar x P(\bar x,\bar y)\\ \bar y Q(\bar x,\bar y)
    \end{pmatrix}.
\end{equation}
One readily checks that $[A_v, C_{v'}]=0$ for all $v,v'$ and that $[B_p, C_v]=0$ by construction. All three stabilizer types thus commute.

\subsection{The three representative models}
\label{sec:three-models}

The decoration polynomials $f_1, f_2, f_3$ used in the main text, together with a choice of $P,Q$ satisfying \eqref{eq:f-decomp}, are:
\begin{equation}
    \begin{aligned}
        \text{Model (a):} & \quad f_1 = 1+y = (1+x)\cdot 0 + (1+y)\cdot 1
        \;\Longrightarrow\; P=0,\; Q=1,                                                      \\[4pt]
        \text{Model (b):} & \quad f_2 = (1+x)(1+y) = (1+x)(1+y^2)+(1+y)(y+xy)
        \;\Longrightarrow\; P=1+y^2,\; Q=y+xy,                                               \\[4pt]
        \text{Model (c):} & \quad f_3 = (1+x)(1+y^2)(1+y+y^2)
        = (1+x)(1+y)+(1+y)(y^3+xy^3)                                                         \\
                          & \qquad\qquad\qquad\qquad\;\Longrightarrow\; P=1+y,\; Q=y^3+xy^3.
    \end{aligned}
\end{equation}
These choices guarantee $\gcd(P,Q)=1$ in each case.

The ground-state degeneracy on an $L\times L$ torus is $4$, independent of $f$. Indeed, there are $3L^2$ qubits and $3L^2$ stabilizer generators ($L^2$ each of $A_v$, $B_p$, $C_v$), but two global constraints $\prod_v A_v = \mathbbm{1}$, $\prod_p B_p = \mathbbm{1}$ reduce the independent generators to $3L^2-2$, giving
\begin{equation}
    \mathrm{GSD} = 2^{3L^2 - (3L^2-2)} = 4,
\end{equation}
confirming that the models all possess $\mathbb{Z}_2$ topological order.

\subsection{Symmetric string operators and the excitation map}
\label{sec:string-ops}

$m$-anyons are violations of the $B_p$ terms. Their creation and movement are realized by edge $X$ operators, which in the polynomial language correspond to vectors with non-zero entries only in the second and third upper components. However, generic edge $X$ strings also create $C_v$ excitations. To preserve the subsystem symmetries, an anyon string operator must commute with every $C_v$ term.

The building blocks that commute with all $C_v$ are obtained from the symplectic product:
\begin{equation}\label{eq:sym-block}
    \mathcal{D}_v =
    \begin{pmatrix}
        0 \\ y Q(x,y)\\ x P(x,y)\\ \hline 0\\0\\0
    \end{pmatrix}.
\end{equation}
One verifies that $\mathcal{C}_v \cdot \mathcal{D}_{v'} = 0$ for any $v,v'$. Any symmetric string operator that does not involve vertex $X$ operators can be expressed as $d(x,y)\mathcal{D}_v$ for some Laurent polynomial $d(x,y)$.

The excitation map of such an operator with respect to the $B_p$ stabilizer is
\begin{equation}
    \epsilon(d\mathcal{D}_v) = \big(0,\; S_B \cdot (d\mathcal{D}_v),\; 0 \big),
\end{equation}
where $S_B = \mathcal{B}_p^\dagger$ acts as the syndrome map for $B_p$, and we evaluate
\begin{equation}\label{eq:excitation-map}
    \begin{aligned}
        S_B \cdot (d\mathcal{D}_v)
         & = (0\;0\;0\,|\,0\;1+\bar y\;1+\bar x)\,
        \lambda_3\,
        \begin{pmatrix}
            0 \\ d\cdot y\cdot Q(x,y)\\ d\cdot x\cdot P(x,y)\\ \hline 0\\0\\0
        \end{pmatrix} \\[4pt]
         & = d(x,y)\big[(1+\bar y)\,y\,Q(x,y)+(1+\bar x)\,x\,P(x,y)\big] \\
         & = d(x,y)\big[(1+y)Q(x,y)+(1+x)P(x,y)\big]
        = d(x,y)\,f(x,y).
    \end{aligned}
\end{equation}
Thus the string operator creates $m$-anyon excitations precisely where the polynomial $d(x,y)$ fails to be a valid configuration of the pattern $f(x,y)$.

Consequently, a finite-support operator $d\mathcal{D}_v$ moves an $m$-anyon of type $\mathfrak{m}(x,y)$ (an element of the quotient module $M = \mathbb{F}_2[x^{\pm1},y^{\pm1}]/\langle f\rangle$) from one location to another if and only if there exists a translation $(a,b)$ such that
\begin{equation}\label{eq:string-condition}
    \epsilon(d\mathcal{D}_v) = (1 + x^a y^b)\,\mathfrak{m}(x,y).
\end{equation}
Equivalently, this means that $1+x^a y^b$ belongs to the annihilator ideal of $\mathfrak{m}$ in the ring $R=\mathbb{F}_2[x^{\pm1},y^{\pm1}]$ modulo $f$. This algebraic condition is the foundation for the mobility classification and fusion rules derived in the next sections.

\section{Mobility of excitations}
We first introduce the formal definition of the mobility polynomial, which algebraically encodes the set of positions an excitation can reach without breaking subsystem symmetries.

\begin{definition}
    \textit{Mobility Polynomial} $r(x,y)$ for a given excitation $\mathcal{E}$ is defined as the sum of monomials $x^iy^j$ such that $\mathcal{E}$ and $x^iy^j\mathcal{E}$ can be annihilated simultaneously by a symmetric string operator.

    For example, if some excitation $\E$ can move along the direction parallel to the $x$-axis, the mobility polynomial of this excitation writes
    \begin{equation}
        r_{\E}(x,y)=\sum_{i=-\infty}^{\infty} x^i.
    \end{equation}
\end{definition}

Given an anyon configuration $\mathfrak{m}(x,y)$ defined in the Letter, we have the following theorem to determine its mobility polynomial:

\begin{theorem}\label{theorem1}
    The mobility polynomial of excitation $\mathfrak{m} (x,y)$ is determined by a characteristic polynomial
    \begin{equation}
        g[f(x,y),\mathfrak{m}(x,y)]=\frac{f(x,y)}{\gcd(f(x,y),\mathfrak{m}(x,y))}\label{eq_thm}
    \end{equation}
    via the following three rules.
    First, if $g(x,y)=1$, then the excitation is fully mobile, with $r_{\mathcal{E}}(x,y)=\sum_{i,j\in\mathbb{Z}}x^iy^j$.
    Second, if $g(x,y)=t[q(x,y)]$ for some monomial $q(x,y)=x^uy^v$, and polynomial $t(q)$ in $\mathbb{F}_2[q]$ with $t(0)=1$  (meaning that $t(q)$ is invertible (i.e., a unit) in $\mathbb{F}_2[[q]]$, the formal power series ring of $q$, containing elements like $\sum_{i=0}^\infty \lambda_i q^i $, where $\lambda_i\in \mathbb{F}_2$. See \cite{AtiyahMacdonald1969} Chapter 1 for more details). Let $t^{-1}(q)=\sum_{k=0}^\infty b_k q^k$ be the inverse of $t(q)$, then the excitation has linear mobility parallel to $(u,v)$, with $r_{\mathcal{E}}=\sum_{k=-\infty}^\infty q^{kT}$, where $T$ is the minimal positive integer such that $b_{k+T}=b_k$.
    Third, otherwise, the excitation is immobile with $r_{\mathcal{E}}(x,y)=1$.
\end{theorem}

To prove this theorem, we first introduce the following useful lemmas:

\begin{lemma}\label{lemma:symmetric}
    If $x^ay^b$ appears in $r(x,y)$, then so does $x^{-a}y^{-b}$. That is to say, the $r(x,y)$ is always symmetric under the antipode map $x\to x^{-1},y\to y^{-1}$.
\end{lemma}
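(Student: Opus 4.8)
The plan is to exploit two structural facts that are already available: the excitation map $\epsilon$ is linear and \emph{covariant} under lattice translations, and both the local string operators and the full symmetry group are closed under translation because $H_f$ is translation invariant. Together these let me convert a move by $(a,b)$ into the reverse move by $(-a,-b)$ merely by rigidly shifting the operator, which is exactly the antipodal symmetry claimed.

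First I would unpack the hypothesis. By the definition of the mobility polynomial, $x^a y^b$ appearing in $r(x,y)$ means there is a symmetric string operator $\mathcal{O}$ that simultaneously annihilates $\mathcal{E}=\mathfrak{m}(x,y)$ and $x^a y^b\mathcal{E}$; in the operator algebra formalism this is the statement that its excitation map is $\epsilon(\mathcal{O})=(1+x^ay^b)\mathcal{E}$ (an irrelevant overall reference monomial may be absorbed). Here ``symmetric'' means $\mathcal{O}$ commutes with every symmetry generator $S(w,f)$. I then form the translate $\mathcal{O}':=x^{-a}y^{-b}\mathcal{O}$, which in the polynomial representation is just multiplication by $x^{-a}y^{-b}$. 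Because the scalar monomial factors out of the symplectic pairing with the fixed stabilizer generators, the excitation map is covariant, $\epsilon(x^{-a}y^{-b}\mathcal{O})=x^{-a}y^{-b}\epsilon(\mathcal{O})$, giving
\[
\epsilon(\mathcal{O}')=x^{-a}y^{-b}(1+x^ay^b)\mathcal{E}=(1+x^{-a}y^{-b})\mathcal{E},
\]
so $\mathcal{O}'$ annihilates $\mathcal{E}$ together with $x^{-a}y^{-b}\mathcal{E}$. Hence $x^{-a}y^{-b}$ lies in $r(x,y)$.

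The one point that needs care --- and the only place the argument could fail --- is verifying that $\mathcal{O}'$ is still symmetry-preserving, since locality is obviously preserved by a rigid shift. I would establish this from translation invariance of the symmetry group: for any generator $S(w,f)$, its shift $x^{a}y^{b}S(w,f)$ is again a valid history and hence a symmetry, while the symplectic pairing obeys the covariance identity $\langle x^{-a}y^{-b}\mathcal{O},\,S\rangle=\langle \mathcal{O},\,x^{a}y^{b}S\rangle$ (the monomial may be moved across the pairing by conjugating it to its inverse). Since $\mathcal{O}$ commutes with every symmetry, the right-hand side vanishes for all $S$, forcing $\mathcal{O}'$ to commute with every symmetry as well. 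Thus $\mathcal{O}'$ is a genuine symmetric string operator, $x^{-a}y^{-b}\in \operatorname{Supp}(r)$, and the proof is complete; everything else is routine bookkeeping in the polynomial representation.
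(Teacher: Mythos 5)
Your proposal is correct and follows essentially the same route as the paper: the paper's proof is exactly the translation-invariance argument, shifting the string operator by $(-a,-b)$ so that it now connects $\mathcal{E}$ at the origin to $x^{-a}y^{-b}\mathcal{E}$. Your additional verification that the shifted operator remains symmetry-preserving (by moving the monomial across the symplectic pairing onto a translated symmetry generator) is a detail the paper leaves implicit, but it is the same underlying idea, not a different approach.
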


\begin{proof} The proof follows directly from the translational symmetry (which we have assumed for our model). If there is a string operator creating two $\mathcal{E}$ excitations at $(0,0)$ and $(a,b)$, then the same string operator can create two $\mathcal{E}$ excitations at $(-a,-b)$ and $(0,0)$ being shifted by the vector $(-a,-b)$, adding the term $x^{-a}y^{-b}$ to the mobility polynomial.
\end{proof}
According to \textit{Lemma} \ref{lemma:symmetric}, we can consider only the $r(x,y)=\sum_{ij} r_{ij} x^i y^j$ with all $i\ge 0$. From now on we only consider mobility polynomials with positive $x$ powers, named \textit{positive mobility polynomial}. For the example above, we say its positive mobility polynomial $r_+(x,y)$ is
\begin{equation}
    r_{\E+}(x,y)=1+x+x^2+x^3+...
\end{equation}

\begin{lemma}\label{lemma:reversible}
    For $p(x,y),q(x,y)\in \mathbb{F}_2[x,y]$, if $p(0,0)=1$, then their characteristic polynomial $g(x,y)=g[p(x,y),q(x,y)]$ satisfies
    \begin{equation}
        g(0,0)=1.
    \end{equation}
\end{lemma}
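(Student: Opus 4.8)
The plan is to read the statement as the assertion that evaluation at the origin---extraction of the $x^0y^0$ coefficient---commutes with formation of the characteristic polynomial. Writing $d=\gcd(p,q)$ so that $p=d\,g$ in $\ff$ with $g=g[p,q]$, I want to show that, in the normalization inherited from $p$, the constant term of $g$ is nonzero, hence equal to $1$ over $\mb F_2$. This is exactly the reversibility hypothesis $t(0)=1$ feeding case~(ii) of Theorem~\ref{theorem1}, so the whole point is to reduce the claim to the multiplicativity of a single coefficient across the factorization $p=d\,g$.

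The cleanest route uses Ostrowski's theorem \cite{ostrowski}, i.e.\ Minkowski additivity of Newton polygons, $\np(p)=\np(d)+\np(g)$, together with the observation that the origin is a \emph{vertex} of $\np(p)$. For a HOCA rule $p=f=1+\sum_{k\ge1}y^kf_k(x)$ this is immediate: $(0,0)$ is the unique point of $\supp(f)$ of minimal $y$-degree, so it uniquely minimizes the functional $(i,j)\mapsto j$ and is therefore extreme. A vertex of a Minkowski sum decomposes uniquely into vertices of the summands, so $0=v_d+v_g$ with $v_d,v_g$ vertices of $\np(d),\np(g)$, and the vertex coefficient factorizes, $p_{00}=d_{v_d}g_{v_g}$; over $\mb F_2$ the value $p_{00}=1$ forces $d_{v_d}=g_{v_g}=1$. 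Normalizing $d$ and $g$ by the canceling monomials $x^{-v_d}$ and $x^{-v_g}$ (their product is $x^{-v_d-v_g}=x^0=1$, so $p=d\,g$ is preserved) moves both vertices to the origin, giving each factor constant term $1$; in particular $g(0,0)=1$. An elementary variant avoids polygons entirely: from $p_{00}=\sum_v d_v g_{-v}=1$ over $\mb F_2$ there is at least one $v$ with $d_v=g_{-v}=1$, and normalizing $d$ by $x^{-v}$ and $g$ by $x^{v}$ again yields two factors of constant term $1$.

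The hard part will be the bookkeeping around normalization rather than any deep estimate. In $\ff$ the gcd, and hence $g$, is defined only up to a monomial unit, so ``$g(0,0)$'' is meaningful only after fixing a convention, and I must verify that the convention singled out above---the one pinned to the distinguished origin of $p=f$---is the same normalization used elsewhere in the paper (the reference-vertex convention). The genuinely substantive input is that the origin is \emph{extreme} in $\np(p)$ rather than interior, since that is what makes the vertex decomposition unique and the coefficient factorization valid; this is guaranteed by the normal form $1+\sum_{k\ge1}y^kf_k(x)$, equivalently by the normalization lemma that brings any admissible support to that form. Once $g(0,0)=1$ is established, the payoff is automatic: $g$ becomes invertible in the power-series ring $\mb F_2[[q]]$, which is precisely the hypothesis needed to enter case~(ii) of Theorem~\ref{theorem1} and to define $t^{-1}(q)=\sum_k b_kq^k$.
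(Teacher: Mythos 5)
Your proposal is correct but takes a genuinely different, and more careful, route than the paper's. The paper's own proof is a three-line computation: writing $p=dp'$ with $d=\gcd(p,q)$, it asserts $p(0,0)=d(0,0)p'(0,0)$ and divides — i.e., it treats extraction of the $x^0y^0$ coefficient as multiplicative. On $\ff$ this is false in general (e.g.\ $d=x$, $p'=x^{-1}$, $p=1$ has constant term $1$ while both factors have constant term $0$), and the quantity $g(0,0)$ is only meaningful once the monomial-unit ambiguity in $\gcd$ is fixed; the paper glosses over both points, and your write-up supplies exactly the missing justification. Your Newton-polygon argument — $\np(p)=\np(d)+\np(g)$ by Ostrowski, unique decomposition of a vertex of a Minkowski sum into vertices of the summands, hence $p_{00}=d_{v_d}g_{v_g}$, which over $\mb F_2$ forces both factors to be $1$ after renormalizing by cancelling monomials — is sound, but note it requires the origin to be \emph{extreme} in $\np(p)$; that holds for a HOCA rule in the normal form $1+\sum_k y^k f_k(x)$, yet not for an arbitrary $p\in\ff$ with $p_{00}=1$ (e.g.\ $p=x^{-1}+1+x+x^2$), which is how the lemma is literally stated. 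Your elementary variant ($p_{00}=\sum_v d_v g_{-v}=1$ over $\mb F_2$ forces some $v$ with $d_v=g_{-v}=1$, then renormalize $d$ and $g$ by the mutually cancelling monomials) covers the general case and is the argument to keep. Either way you land on what the lemma is actually used for in Theorem~\ref{theorem1}: up to the harmless unit-monomial normalization, $\dim\np(g)\le 1$ yields $g=t(q)$ with $t(0)=1$, hence $t$ invertible in $\mb F_2[[q]]$.
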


\begin{proof}
    Let $\gcd(p,q)=d(x,y)$, and $p=dp'$, $q=dq' $. It follows that
    \begin{equation}
        p(0,0)=d(0,0)p'(0,0)=1,
    \end{equation}
    so $d(0,0)=1$. And we have
    \begin{equation}
        g(p,q)=\frac{p}{\gcd(p,q)}=\frac pd,
    \end{equation}
    so
    \begin{equation}
        g(0,0)=\frac{p(0,0)}{d(0,0)}=1.
    \end{equation}
\end{proof}

Now we prove Theorem \ref{theorem1}.
\begin{proof}
    We begin our proof by clarifying some concepts. Suppose the mobility polynomial of excitation $\mathcal E=\m(x,y)$ is \begin{equation}
        r_{\E}(x,y)=\sum_{i,j} r_{ij}x^iy^j,
    \end{equation}
    then it means that for all $r_{ij}=1$, there exists a symmetric, finitely supported string operator $\mc{O}_{ij}$, such that
    \begin{equation}
        \epsilon(\mc O_{ij})=(1+x^iy^j)\m (x,y).
    \end{equation}
    From Sec.~\ref{sec:string-ops} we learn that
    \begin{equation}
        \epsilon(\mc O_{ij})=d_{ij}(x,y)f(x,y)
    \end{equation}
    for some $d_{ij}(x,y)\in \mathbb F_2[x,y,\bar x,\bar y]$. That is to say, $\epsilon(\mc O_{ij})$ is in the ideal $\id{f}$ generated by $f(x,y)$, namely
    \begin{equation}
        \epsilon(\mc O_{ij})\in \id{f}.
    \end{equation}
    So we can finish our proof in two steps: first, finding all possible $h(x,y)$ such that $h(x,y)\m(x,y)\in\id f$; and second, checking whether $1+x^iy^j$ is one of the $h(x,y)$.
    By definition, the set of all possible $h(x,y)$ is exactly the quotient of two ideals:
    \begin{equation}
        h(x,y)\in \id f :\id \m.
    \end{equation}
    Since $\mathbb F_2[x,y,\bar x,\bar y]$ is a unique factorization domain (UFD), we have \cite{AtiyahMacdonald1969}
    \begin{equation}
        \id f :\id \m=\id{\frac{f}{\gcd(f,\m)}},
    \end{equation}
    which contains all possible polynomial $h(x,y)$. Now we determine whether $1+x^iy^j$ is in $\id f:\id \m$. We start with some useful definitions and lemmas.

    \begin{definition}
        In 2D, we say the dimension of a Newton's polygon $\dim\np(f)$ is $0$ if $\np(f)$ contains only 1 point, or $1$  if $\np(f)$ can be fully included in a straight line, or $2$ otherwise.
    \end{definition}

    \begin{definition}
        For two point sets $A,B$ in vector space, the Minkowski sum \cite{minkowski} $A+B$ is defined as
        \begin{equation}
            A+B:=\{a+b|a\in A,b\in B\}.
        \end{equation}
    \end{definition}

    \begin{lemma}[Ostrowski\cite{GAO2001501,ostrowski}]\label{lemma:ostrowski}
        For two polynomials $f,g\in\ff$, we have
        \begin{equation}
            \np (f) + \np(g)=\np(f\cdot g),
        \end{equation}
        where ``$+$'' is the Minkowski sum, and ``$\cdot$'' is the normal polynomial multiplication.
    \end{lemma}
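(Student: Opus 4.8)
The plan is to prove the stated identity $\np(f)+\np(g)=\np(fg)$ by establishing the two inclusions separately, working entirely at the level of supports and their convex hulls, since by definition $\np(h)=\conv(\supp(h))$. Throughout I will use the elementary set-theoretic fact that for finite point sets $A,B\subseteq\mathbb{Z}^2$ one has $\conv(A+B)=\conv(A)+\conv(B)$ (the Minkowski sum of two convex polytopes is convex and contains $A+B$, while any point of $\conv(A)+\conv(B)$ expands, via $\sum_i\lambda_i\sum_j\mu_j=1$, into a convex combination of points of $A+B$).

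The inclusion $\np(fg)\subseteq\np(f)+\np(g)$ is the routine direction. Every monomial appearing in the expansion of $fg$ arises as a product of a monomial of $f$ with a monomial of $g$, so $\supp(fg)\subseteq\supp(f)+\supp(g)$; cancellation in $\mathbb{F}_2$ can only shrink the support, never enlarge it. Taking convex hulls and applying the set identity above yields $\np(fg)=\conv(\supp(fg))\subseteq\conv(\supp(f)+\supp(g))=\np(f)+\np(g)$.

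The reverse inclusion $\np(f)+\np(g)\subseteq\np(fg)$ is the substantive part, and the plan is to show that every \emph{vertex} $w$ of the polytope $P:=\np(f)+\np(g)$ already lies in $\supp(fg)$; since $P$ is the convex hull of its vertices and $\np(fg)$ is convex, this suffices. For a vertex $w$, choose a linear functional $\ell$ on $\mathbb{R}^2$ whose maximum over $P$ is attained uniquely at $w$. Writing $\max_P\ell=\max_{\np(f)}\ell+\max_{\np(g)}\ell$, the maximizing faces $F_f\subseteq\np(f)$ and $F_g\subseteq\np(g)$ satisfy $F_f+F_g=\{w\}$, and a Minkowski sum of two polytopes is a single point only when both summands are single points. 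Hence $F_f=\{v_f\}$ and $F_g=\{v_g\}$ are vertices with $w=v_f+v_g$, and $\ell$ is maximized over $\supp(f)$ (resp. $\supp(g)$) uniquely at $v_f$ (resp. $v_g$). Consequently any factorization $w=p+q$ with $p\in\supp(f)$, $q\in\supp(g)$ forces $\ell(p)=\max_{\np(f)}\ell$ and $\ell(q)=\max_{\np(g)}\ell$, i.e. $p=v_f$ and $q=v_g$; the decomposition of $w$ is unique. Therefore the coefficient of $x^{w_1}y^{w_2}$ in $fg$ equals the single product of the (nonzero) coefficients of $v_f$ in $f$ and $v_g$ in $g$, which in $\mathbb{F}_2$ is $1\cdot 1=1\neq 0$, so $w\in\supp(fg)$.

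I expect the main obstacle to be precisely this no-cancellation control at extremal points, encapsulated in the claim that a vertex of the Minkowski sum decomposes \emph{uniquely} as a sum of vertices of the summands. This is where the convex-geometric input (unique maximizer $\Rightarrow$ $F_f+F_g$ a point $\Rightarrow$ both faces are points) does the real work, and it is also where the field structure matters: because the extremal monomial has a unique source, its coefficient is a lone product rather than a sum, so over an integral domain—and trivially over $\mathbb{F}_2$—it cannot vanish. The interior lattice points of $P$ require no separate argument, as they are subsumed once all vertices are shown to lie in the convex set $\np(fg)$.
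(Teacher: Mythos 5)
Your proof is correct, but there is nothing in the paper to compare it against: the paper does not prove this lemma, it imports it as a known result with citations to Ostrowski and Gao \cite{ostrowski,GAO2001501}, and uses it only downstream (e.g., to derive Lemma~\ref{lemma:greater} and the immobility case of the mobility theorem). What you have supplied is therefore a self-contained proof of the imported result, and it is essentially the standard convex-geometry argument found in the cited literature. Both directions are sound: the easy inclusion follows from $\supp(f\cdot g)\subseteq\supp(f)+\supp(g)$ (cancellation over $\mathbb{F}_2$ only shrinks supports) together with the identity $\conv(A+B)=\conv(A)+\conv(B)$; the substantive inclusion correctly isolates the key point, namely that a vertex $w$ of $\np(f)+\np(g)$ decomposes \emph{uniquely} as $w=v_f+v_g$ with $v_f,v_g$ extreme points of the summands, because the $\ell$-maximal face of a Minkowski sum is the Minkowski sum of the $\ell$-maximal faces, and a sum of two nonempty sets is a singleton only if both are singletons. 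Hence the coefficient of $w$ in $f\cdot g$ is a single product of nonzero coefficients, which cannot vanish over an integral domain (trivially over $\mathbb{F}_2$), so all vertices of $\np(f)+\np(g)$ lie in $\supp(f\cdot g)$ and convexity finishes the argument. The only steps you leave implicit are standard facts you should perhaps flag: every vertex of a polytope is exposed (admits a linear functional uniquely maximized there), and the extreme points of $\conv(\supp(f))$ belong to $\supp(f)$ itself — the latter is what guarantees that the coefficients of $f$ at $v_f$ and of $g$ at $v_g$ are genuinely nonzero. Neither omission is a gap.
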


    \begin{lemma}\label{lemma:greater}
        For $f,g\in\ff$,
        \begin{equation}
            \dim \np(f\cdot g)\ge \max (\dim\np(f),\dim \np(g)).
        \end{equation}
    \end{lemma}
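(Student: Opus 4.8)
The plan is to deduce the statement from Ostrowski's theorem (Lemma \ref{lemma:ostrowski}) by reducing everything to an elementary convex-geometry fact about Minkowski sums. First I would rewrite the left-hand side using $\np(f\cdot g)=\np(f)+\np(g)$, where $+$ is the Minkowski sum. This turns the lemma into the purely geometric claim
\begin{equation}
    \dim\big(\np(f)+\np(g)\big)\ge \max\big(\dim\np(f),\dim\np(g)\big).
\end{equation}
Here the combinatorial ``dimension'' ($0$ for a point, $1$ for a subset of a line, $2$ otherwise) is exactly the affine dimension of the convex hull, so I would first note that $\dim\np(\cdot)$ is monotone under set inclusion (a larger set has an affine hull of at least as large a dimension) and invariant under translation (translating a set translates its affine hull without changing its dimension).

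The key step is the containment argument. Since $g\ne 0$, its Newton polygon $\np(g)$ is nonempty, so I may fix a point $b_0\in\np(g)$. Then
\begin{equation}
    \np(f)+b_0 \;=\; \{\,a+b_0 \mid a\in\np(f)\,\}\;\subseteq\; \np(f)+\np(g).
\end{equation}
The left-hand set is a translate of $\np(f)$, hence has dimension $\dim\np(f)$. By monotonicity of dimension under inclusion, $\dim(\np(f)+\np(g))\ge \dim\np(f)$. Running the symmetric argument with a fixed $a_0\in\np(f)$ gives $\np(g)+a_0\subseteq \np(f)+\np(g)$ and therefore $\dim(\np(f)+\np(g))\ge \dim\np(g)$. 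Taking the larger of the two lower bounds and applying Lemma \ref{lemma:ostrowski} yields $\dim\np(f\cdot g)\ge \max(\dim\np(f),\dim\np(g))$, which is the claim.

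I do not expect a genuine obstacle here; the whole content is that a Minkowski sum contains translated copies of each summand. The only points requiring care are bookkeeping ones: confirming that both $f$ and $g$ are nonzero so that the fixed points $a_0,b_0$ exist, and explicitly reconciling the three-valued combinatorial definition of $\dim\np$ with the affine dimension of the convex hull so that monotonicity and translation-invariance may be invoked. Once those identifications are in place, the inclusions above close the argument immediately.
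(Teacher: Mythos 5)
Your proposal is correct and follows essentially the same route as the paper's own proof: apply Lemma \ref{lemma:ostrowski} to replace $\np(f\cdot g)$ by the Minkowski sum, then observe that the sum contains a translate of each summand and that dimension is translation-invariant and monotone under inclusion. The only cosmetic difference is that the paper handles one side via a WLOG assumption $\dim\np(f)\ge\dim\np(g)$ while you run the containment argument symmetrically for both factors.
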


    \begin{proof}
        Using Lemma \ref{lemma:ostrowski}, we have
        \begin{equation}
            \dim \np(f\cdot g)=\dim \left(\np(f)+\np(g)\right).
        \end{equation}
        WLOG, we assume $\dim \np(f)\ge\dim\np(g)$. Let point $a\in \np (g)$. Then
        \begin{equation}
            \dim(\np(f)+a) =\dim\np(f)
        \end{equation}
        since $\np(f)+a$ is nothing but a translation of $\np(f)$.
        And we have
        \begin{equation}
            \np(f)+a\subset \np(f)+\np(g)
        \end{equation}
        since $a\in\np(g)$. Therefore
        \begin{equation}
            \max (\dim\np(f),\dim \np(g))=\dim\np(f)=\dim(\np(f)+a)\le \dim(\np(f)+\np(g)).
        \end{equation}
    \end{proof}

    Now we can start to proof the theorem. Let
    \begin{equation}
        g[f(x,y),\m(x,y)]=\frac{f}{\gcd(f,\m)},
    \end{equation}
    then we have the following cases.
    First, if $g(x,y)=1$, or equivalently $\dim \np(g)=0$, then $\id g=\ff $, which is the original polynomial ring. Thus $\forall i,j\in\mb Z$, $1+x^iy^j\in\id g$. Thus the mobility polynomial is \begin{equation}
        r_{\E}(x,y)=\sum_{i,j\in\mb Z}x^iy^j,
    \end{equation}
    indicating that the excitation is fully mobile.

    Second, if $g(x,y)=t(x^uy^v)$, or equivalently, according to Lemma \ref{lemma:reversible}, $\dim \np(g)=1$ , where $t$ is some polynomial in $\mathbb{F}_2[x]$, then we need to prove that there exists a set $\{d_N(x,y)|N\in \mathbb{Z}_+\}$ such that (according to Eq.~(\ref{eq:string-condition}))
    \begin{equation}
        d_N(x,y)\bar{t}(x,y)=1+x^{uN}y^{vN}=1+q^N,~N\in \mathbb{Z}_+,~q:=x^uy^v.
    \end{equation}
    The bar sign over $t$ is taken since in Eq.~(\ref{eq:excitation-map}), the excitation map of each symmetric block is a spatially reversed HOCA rule.

    Now we show that such set of polynomials does exist. Since $t$ is invertible, notice that when $N\to\infty$, or the excitation is moved infinitely far away, we have
    \begin{equation}
        d_{\infty}(x,y)\bar t(x,y)=1.
    \end{equation}

    That is to say
    \begin{equation}
        d_{\infty}=\bar{t}^{-1}(x,y).
    \end{equation}
    In $\mathbb{F}_2[[x]]$, a polynomial has an inverse if and only if its constant term is $1$. According to our requirement of $t(x,y)=t[q(x,y)]=\sum_{i=0}^N a_iq^i$ with $a_0=1$, it is always invertible. Now we want to prove that for any finite $t(x,y)$, its inverse $t^{-1}(x,y)$ is always a infinite series $\sum_{i=0}^{\infty} b_iq^i$ whose coefficient is periodic, i.e. exists some $T>0$ such that for any $i>0$, $b_{i+T}=b_i$.

    By definition, $t(q)t^{-1}(q)=1$, writing out explicitly, this implies
    \begin{equation}
        \begin{cases}
            a_0b_0                                & =1 \implies b_0=1                           \\
            a_1b_0+a_0b_1                         & =0 \implies b_1=a_1b_0                      \\
            a_2b_0+a_1b_1+a_0b_2                  & =0\implies b_2=a_2b_0+a_1b_1                \\

                                                  & \vdots                                      \\
            a_0b_k+a_1b_{k-1}+\cdots +a_N b_{k-N} & =0\implies b_k=a_1b_{k-1}+\cdots a_Nb_{k-N} \\
                                                  & \vdots
        \end{cases}
    \end{equation}

    Define $S_k=(b_{k-1},b_{k-2},...,b_{k-N})$, and $b_{i}\equiv 0 ~\forall i<0 $, then $S_k$ effectively forms a linear feedback shift register (LFSR). Since its state space is finite (including $2^N$ possible states), according to the pigeonhole principle, it must repeat itself with a finite period $T$, i.e. $b_{k+T}=b_k,~ \forall k\ge 0$. Then it is straightforward to write $t^{-1}(x,y)$ is the following form:
    \begin{equation}
        \begin{aligned}
            t^{-1}[q(x,y)] & =\sum_{i=0}^{\infty} q^{iT}\left(\sum_{j=0}^{T-1} b_{j}q^{j}\right) \\
                           & = \sum_{i=0}^{\infty} q^{iT} p(q)                                   \\
                           & =\frac{p(q)}{1+q^T},
        \end{aligned}
    \end{equation}
    with defining $p(q)\equiv\sum_{j=0}^{T-1} b_{j}q^{j}$.

    Now we have reached the conclusion that
    \begin{equation}
        \begin{gathered}
            \forall t[q(x,y)] \text{ invertible, } \exists p[q(x,y)] \text{ and } T\in \mathbb{Z}_+ ,\\
            \text{such that } \frac{t(q)p(q)}{1+q^T}=1.
        \end{gathered}
    \end{equation}

    \begin{lemma}\label{lemma4}
        Now we claim that under truncation $\frac{p(q)}{1+q^T}\to \left[\frac{p(q)}{1+q^T}\right]_{[0,mT-1]}$ where $m\in \mathbb{Z}_+$,
        \begin{equation}
            \left[\frac{p(q)}{1+q^T}\right]_{[0,mT-1]}t(q)=1+q^{mT}.
        \end{equation}
    \end{lemma}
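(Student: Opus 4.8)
The plan is to work entirely inside the formal power series ring $\mb F_2[[q]]$ and to exploit the periodicity of the coefficients $b_k$ established just above the statement. Recall that $t^{-1}(q) = \frac{p(q)}{1+q^T} = \sum_{k=0}^\infty b_k q^k$ with $b_{k+T} = b_k$ for all $k \ge 0$, and that the truncation appearing in the lemma is precisely $d_{mT}(q) := \sum_{k=0}^{mT-1} b_k q^k$, i.e. the degree-$(\le mT-1)$ part of this inverse series.

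First I would split the full inverse into its truncation and its tail, writing $t^{-1}(q) = d_{mT}(q) + R(q)$ with $R(q) = \sum_{k\ge mT} b_k q^k$. The crucial observation — and the only place where periodicity enters — is that $mT$ is an integer multiple of the period $T$, so reindexing $k = j+mT$ and applying $b_{k+T}=b_k$ repeatedly gives $b_{j+mT} = b_j$ for every $j\ge 0$. Hence $R(q) = q^{mT}\sum_{j\ge 0} b_j q^j = q^{mT}\, t^{-1}(q)$: the tail of the inverse series is simply the whole series shifted up by $q^{mT}$.

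Next I would feed this back into the defining identity $t(q)\,t^{-1}(q) = 1$. Substituting the split yields $1 = t(q)\big(d_{mT}(q) + R(q)\big)$, and the tail contributes $t(q)R(q) = t(q)\,q^{mT}\,t^{-1}(q) = q^{mT}$. Subtracting, and using $-1 = 1$ in $\mb F_2$, I obtain $t(q)\,d_{mT}(q) = 1 + q^{mT}$. Since $t(q)$ and $d_{mT}(q)$ are both genuine polynomials, their product is a polynomial, so this identity in $\mb F_2[[q]]$ is at once an identity of polynomials, which is exactly the claim of the lemma.

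I do not expect a serious obstacle here; the entire content is the periodicity identity $R(q) = q^{mT} t^{-1}(q)$, after which the computation is forced. The one point worth flagging, rather than verifying by hand, is the bookkeeping: $t(q)\,d_{mT}(q)$ has degree up to $N+mT-1$, so a priori it could carry monomials of degree strictly between $mT$ and $N+mT-1$. The power-series argument guarantees automatically that all such monomials cancel, so I would lean on it instead of a coefficient-by-coefficient check. I would also note explicitly that the hypothesis $t(0)=1$ is exactly what makes $t^{-1}$ exist in $\mb F_2[[q]]$, which is what legitimizes the whole manipulation.
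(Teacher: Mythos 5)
Your proof is correct. It takes a slightly different route from the paper's: the paper expands the truncation explicitly as the product $\left(1+q^T+\cdots+q^{(m-1)T}\right)p(q)$ (using that $p$ has degree at most $T-1$, so this product reproduces exactly the coefficients $b_0,\dots,b_{mT-1}$), then invokes $p(q)t(q)=1+q^T$ and telescopes $\left(1+q^T+\cdots+q^{(m-1)T}\right)\left(1+q^T\right)=1+q^{mT}$. You instead never touch the explicit form of $p$: you split $t^{-1}=d_{mT}+R$ in $\mb F_2[[q]]$, use the periodicity $b_{j+mT}=b_j$ to identify the tail as $R=q^{mT}t^{-1}$, and read off the result from $t\,t^{-1}=1$. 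The two arguments carry the same content (periodicity of the inverse series with period dividing $mT$), but yours is arguably cleaner in that it avoids the bookkeeping of verifying that the truncated product really equals $\sum_{k=0}^{mT-1}b_kq^k$, which the paper asserts by inspection; the paper's version, in exchange, exhibits the inverse of $t$ modulo $1+q^{mT}$ as an explicit polynomial, which is the form used immediately afterwards in the analysis of the other truncations $[\,\cdot\,]_{[0,mT-1+c]}$. Your closing remarks about why the power-series identity descends to a polynomial identity and about the role of $t(0)=1$ are both apt.
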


    \begin{proof}
        First notice that under this truncation, explicitly write out $p(q)=\sum_{i=0}^{T-1}p_iq^i$ and $p_0=1 $, we can write
        \begin{equation}\label{eq:period}
            \begin{aligned}
                t(q)\left[\frac{p(q)}{1+q^T}\right]_{[0,mT-1]} & =t(q)(1+q^T+q^{2T}+\cdots +q^{(m-1)T})(1+p_1q+\cdots +p_{T-1}q^{T-1}) \\
                                                               & =\left[\frac{1}{1+q^T}\right]_{[0,mT-1]}p(q)t(q)                      \\
                                                               & =(1+q^T+\cdots q^{(m-1)T})(1+q^T)                                     \\
                                                               & =1+q^{mT}.
            \end{aligned}
        \end{equation}
    \end{proof}

    For other possible truncations $\left[\frac{p(q)}{1+q^T}\right]_{[0,mT-1+c]}$ where $c\in\{1,2,\cdots,T-1\}$,
    \begin{equation}
        \begin{aligned}
            t(q)\left[\frac{p(q)}{1+q^T}\right]_{[0,mT-1+c]} & =t(q)\left[\frac{p(q)}{1+q^T}\right]_{[0,mT-1]}+t(q)\left[\frac{p(q)}{1+q^T}\right]_{[mT,mT-1+c]} \\
                                                             & =1+q^{mT}+q^{mT}t(q)(1+p_1q+...+p_{c-1}q^{c-1}).
        \end{aligned}
    \end{equation}

    \begin{lemma}\label{lemma5}
        We claim that $t(q)(1+p_1q+\cdots +p_{c-1}q^{c-1})\ne 1+q^p,~\forall p\in\mathbb{Z}_+\backslash \{T\}$.
    \end{lemma}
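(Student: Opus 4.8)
The plan is to argue by contradiction, converting the asserted non-equality into a divisibility statement and then exploiting the \emph{minimality} of the period $T$. Suppose that for some $c\in\{1,\dots,T-1\}$ and some $p\in\mathbb{Z}_+\setminus\{T\}$ one had $t(q)\,P_c(q)=1+q^p$, where $P_c(q):=1+p_1q+\cdots+p_cq^c$ and $p_j=b_j$ for $j\le T-1$. Since $P_c(q)$ is a genuine polynomial, this forces $t(q)\mid 1+q^p$ in $\mathbb{F}_2[q]$, i.e.\ $q^p\equiv 1\pmod{t(q)}$. The hypothesis $t(0)=1$ ensures $\gcd(q,t)=1$, so $q$ is a unit in $\mathbb{F}_2[q]/(t)$ and $\operatorname{ord}_t(q)$ is well defined.

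The first key step is to identify the minimal period with this order: $T=\operatorname{ord}_t(q)$. Indeed, the coefficient sequence $(b_k)$ of $t^{-1}=\sum_k b_kq^k$ has period $T'$ exactly when $\sum_k b_kq^k=P(q)/(1+q^{T'})$ for a polynomial $P$ of degree $<T'$, i.e.\ exactly when $t\mid 1+q^{T'}$; hence the least period equals $\min\{T':t\mid 1+q^{T'}\}=\operatorname{ord}_t(q)$. (This also shows the sequence is purely, not merely eventually, periodic, again using $t(0)=1$.) Consequently $q^p\equiv 1\pmod t$ gives $T\mid p$, and together with $p\in\mathbb{Z}_+$, $p\neq T$ this forces $p=mT$ with $m\ge 2$.

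The second step evaluates $(1+q^{mT})/t$ explicitly using periodicity, exactly as in Lemma~\ref{lemma4}: writing $(1+q^{mT})t^{-1}=\sum_k b_kq^k+\sum_k b_kq^{k+mT}$ and noting that for $n\ge mT$ the coefficient of $q^n$ is $b_n+b_{n-mT}=0$ (since $mT$ is a multiple of the period), we obtain
\begin{equation}
P_c(q)=\frac{1+q^{mT}}{t(q)}=\sum_{k=0}^{mT-1}b_kq^k .
\end{equation}
But $\deg P_c\le c\le T-1$, while the right-hand side runs up to degree $mT-1$; equality therefore demands $b_k=0$ for every $k$ with $c<k\le mT-1$, a block of at least $mT-1-c\ge 2T-1-(T-1)=T$ consecutive vanishing coefficients.

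The decisive — and only genuinely nontrivial — step is to contradict this run of zeros. Because $(b_k)$ is purely periodic with period $T$ and $b_0=1$, every window of $T$ consecutive indices contains one index $\equiv 0\pmod T$, whose coefficient is $b_0=1\neq 0$; hence $T$ consecutive $b_k$ can never all vanish. This contradiction rules out $t(q)P_c(q)=1+q^p$ for all $p\neq T$, establishing the claim. I expect the main obstacle to be making the identification $T=\operatorname{ord}_t(q)$ (with genuine pure periodicity) fully airtight, and verifying the tight boundary case $m=2$, $c=T-1$, where the forced zero block is exactly one full period and the contradiction is sharpest.
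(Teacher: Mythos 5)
Your proof is correct. The first half coincides with the paper's: both arguments reduce to showing that any $p$ with $t(q)\mid 1+q^p$ must be a multiple of $T$, hence $p=mT$ with $m\ge 2$ once $p\neq T$; you actually justify the identification $T=\operatorname{ord}_{t}(q)$ (equivalence of ``$(b_k)$ has period $T'$'' with ``$t\mid 1+q^{T'}$'', plus pure periodicity from $t(0)=1$) more explicitly than the paper, which simply points back to Eq.~(\ref{eq:period}). Where you diverge is the closing contradiction. The paper finishes in one line by comparing degrees: $\deg\bigl(t(q)[p(q)]_{[0,c]}\bigr)\le\deg\bigl(t(q)p(q)\bigr)=T$, so $p\le T$, incompatible with $p=mT\ge 2T$. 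You instead compute the quotient $(1+q^{mT})/t(q)=\sum_{k=0}^{mT-1}b_kq^k$ explicitly and observe that equality with a polynomial of degree at most $c\le T-1$ would force at least $T$ consecutive coefficients $b_k$ to vanish, which is impossible for a $T$-periodic sequence with $b_0=1$. Both closings are valid; the degree count is shorter, while your coefficient argument has the small side benefit of reusing the machinery of Lemma~\ref{lemma4} and making visible exactly which coefficients would have to vanish, including the boundary case $m=2$, $c=T-1$ that you flag.
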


    \begin{proof}
        The proof is done by contradiction. If $t(q)(1+p_1q+\cdots +p_{c-1}q^{c-1})=1+q^p$ and $p\ne T$, then because $T$ is the minimal positive period of the LFSR (i.e., the multiplicative order of $q$ modulo $t(q)$ in $\mathbb{F}_2[[q]]$), $t(q)$ dividing $1+q^p$ implies that $p$ must be a multiple of $T$. Hence,
        \begin{equation}\label{eq:cond1}
            \exists m> 1, \text{s.t. } p=mT.
        \end{equation}
        But on the other hand
        \begin{equation}
            \deg (t(q)p(q))=\deg (1+q^T)=T\ge \deg (t(q)[p(q)]_{[0,c-1]})=p,
        \end{equation}
        thus we can infer that
        \begin{equation}
            p\le T,
        \end{equation}
        contradicting with Eq.~(\ref{eq:cond1}).
    \end{proof}

    According to\textit{ Lemma} \ref{lemma4} and \textit{Lemma} \ref{lemma5}, we conclude that for any $k\in\mathbb{Z}_+$, if
    \begin{equation}
        \left[\frac{p(q)}{1+q^T}\right]_{[0,k]}t(q)=1+q^{p},
    \end{equation}
    then
    \begin{equation}
        p\in\{T,2T,3T,\cdots\},
    \end{equation}
    where $T$ is the minimal positive period of $t^{-1}(q)$. Here we discuss $t(q)$ instead of $\bar t(q)$ since they are symmetric about the $(0,0)$ point and the mobility polynomial is also symmetric about the $(0,0)$ point. By definition, the positive mobility polynomial $r(x,y)$ of this excitation $\mathcal{E}$ writes
    \begin{equation}
        r_{\E+}(q)=1+q^T+q^{2T}+q^{3T}+\cdots.
    \end{equation}
    According to Lemma \ref{lemma:symmetric}, the mobility polynomial is
    \begin{equation}
        r_\E=r_{\E+}(q)+r_{\E+}(\bar q)+1,
    \end{equation}
    which is exactly what we want, finishing the proof of this case.

    Finally, for any other cases, or $\dim \np(g)=2$, then $1+x^iy^j\notin \id g$. This follows directly from Lemma \ref{lemma:greater}. Since $\dim\np(1+x^iy^j)=1$ and we are considering a 2D vector space, assume that $\exists h(x,y)$ such that $hg=1+x^iy^j$, then we have
    \begin{equation}
        \dim\np(hg)\ge \max(\dim\np(h),\dim\np(g))=2,
    \end{equation}
    giving $1=\dim\np(1+x^iy^j)=\dim\np(hg)=2$, which leads a contradiction. Therefore such $h(x,y)$ cannot exist. This tells us the excitation cannot be moved to anywhere else without breaking the symmetry or creating other excitations, giving us
    \begin{equation}
        r_{\cal E}=1.
    \end{equation}

    Now we finished the proof for Theorem \ref{theorem1}.
\end{proof}

\section{Mobility Fusion: Algebraic Derivation and Model Calculations}
\label{sec:mobility-fusion}

We first introduce the following useful theorem, which gives the precise fusion rules for the mobility classes defined in Eq.~\eqref{eq:mf_class}:
\begin{theorem}\label{theorem2}
    Let $R = \ff$. Let $f \in R \setminus \{0\}$ be a non-unit with the unique irreducible factorization $f = \prod_{j=1}^r p_j^{e_j}$, where $e_j \ge 1$, and $\{p_1, \dots, p_r\}$ is a set of pairwise non-associate irreducible elements in $R$. For any $\mathfrak{m} \in R$, define the characteristic polynomial of $\mathfrak{m}$ with respect to $f$ as
    \begin{equation}
        g(\mathfrak{m}) = \prod_{j=1}^r p_j^{k_j}, \quad \text{where} \quad k_j = e_j - \min\big(v_{p_j}(\mathfrak{m}), e_j\big),
    \end{equation}
    and $v_{p_j}$ denotes the $p_j$-adic valuation on $R$ (the exponent of $p_j$ in the unique factorization).

    Given elements $\mathfrak{m}_1, \mathfrak{m}_2 \in R$, let $g(\mathfrak{m}_1) = \prod_{j=1}^r p_j^{k_{1,j}}$ and $g(\mathfrak{m}_2) = \prod_{j=1}^r p_j^{k_{2,j}}$. Let $g(\mathfrak{m}_1 + \mathfrak{m}_2) = \prod_{j=1}^r p_j^{k_j}$. Then, for each $1 \le j \le r$, the exponent $k_j$ necessarily satisfies the following criteria: First, if $k_{1,j} \neq k_{2,j}$, then $k_j = \max(k_{1,j}, k_{2,j})$. Second, if $k_{1,j} = k_{2,j} = 0$, then $k_j = 0$. Third, if $k_{1,j} = k_{2,j} = c_j > 0$, then $k_j$ can take any integer value in the set $\{0, 1, \dots, c_j\}$.
    Conversely, for any sequence of integers $(k_1, \dots, k_r)$ satisfying the local conditions above, there exist $\mathfrak{m}_1, \mathfrak{m}_2 \in R$ such that $g(\mathfrak{m}_1) = \prod_{j=1}^r p_j^{k_{1,j}}$, $g(\mathfrak{m}_2) = \prod_{j=1}^r p_j^{k_{2,j}}$, and $g(\mathfrak{m}_1 + \mathfrak{m}_2) = \prod_{j=1}^r p_j^{k_j}$.
\end{theorem}

\begin{proof}
    Let $v_j = v_{p_j}$ for $1 \le j \le r$. By definition, $g(\mathfrak{m}) = \prod_{j=1}^r p_j^{k_j}$ is equivalent to strictly satisfying the truncated valuation:
    \begin{equation}
        \min\big(v_j(\mathfrak{m}), e_j\big) = e_j - k_j \quad \text{for all } 1 \le j \le r.
    \end{equation}

    \textbf{Necessity:} Fix an index $j$. We determine the possible values of $k_j = e_j - \min\big(v_j(\mathfrak{m}_1 + \mathfrak{m}_2), e_j\big)$ given $k_{1,j}$ and $k_{2,j}$.
    When $k_{1,j} \neq k_{2,j}$, we can assume without loss of generality that $k_{1,j} > k_{2,j}$. Then $v_j(\mathfrak{m}_1) = e_j - k_{1,j} < e_j - k_{2,j} = v_j(\mathfrak{m}_2)$. Since the valuations are strictly less than $e_j$ and distinct, $v_j(\mathfrak{m}_1 + \mathfrak{m}_2) = \min\big(v_j(\mathfrak{m}_1), v_j(\mathfrak{m}_2)\big) = e_j - k_{1,j}$. Thus, $\min\big(v_j(\mathfrak{m}_1 + \mathfrak{m}_2), e_j\big) = e_j - k_{1,j}$, yielding $k_j = k_{1,j} = \max(k_{1,j}, k_{2,j})$.
    When $k_{1,j} = k_{2,j} = 0$, we have $v_j(\mathfrak{m}_1) \ge e_j$ and $v_j(\mathfrak{m}_2) \ge e_j$. By the ultrametric inequality, $v_j(\mathfrak{m}_1 + \mathfrak{m}_2) \ge e_j$. Consequently, $\min\big(v_j(\mathfrak{m}_1 + \mathfrak{m}_2), e_j\big) = e_j$, yielding $k_j = 0$.
    When $k_{1,j} = k_{2,j} = c_j > 0$, we find $v_j(\mathfrak{m}_1) = v_j(\mathfrak{m}_2) = e_j - c_j < e_j$. It follows that $v_j(\mathfrak{m}_1 + \mathfrak{m}_2) \ge e_j - c_j$. Applying the minimum operation with $e_j$, we obtain $e_j - c_j \le \min\big(v_j(\mathfrak{m}_1 + \mathfrak{m}_2), e_j\big) \le e_j$. Subtracting this from $e_j$ yields $0 \le k_j \le c_j$. Thus $k_j \in \{0, 1, \dots, c_j\}$.

    \textbf{Sufficiency:} Let $(k_1, \dots, k_r)$ be a sequence satisfying the stated conditions.
    First, we specify local target elements $w_{1,j}, w_{2,j} \in R_{(p_j)}$ for each $j$ such that:
    \begin{equation}\label{eq:local_target}
        \min\big(v_j(w_{i,j}), e_j\big) = e_j - k_{i,j} \quad (i=1,2), \quad \text{and} \quad \min\big(v_j(w_{1,j} + w_{2,j}), e_j\big) = e_j - k_j.
    \end{equation}
    Let $\kappa_j = \mathrm{Frac}(R/(p_j))$. Since $R = \ff$ is a domain of Krull dimension 2, and $(p_j)$ is of height 1, $R/(p_j)$ is a domain over $\mathbb{F}_2$ of dimension 1, making the residue field $\kappa_j$ infinite.

    If $k_{1,j} \neq k_{2,j}$, we set $w_{1,j} = p_j^{e_j - k_{1,j}}$ and $w_{2,j} = p_j^{e_j - k_{2,j}}$.
    If $k_{1,j} = k_{2,j} = 0$, we set $w_{1,j} = p_j^{e_j}$ and $w_{2,j} = p_j^{e_j}$.
    If $k_{1,j} = k_{2,j} = c_j > 0$ and $k_j < c_j$, let $\ell = c_j - k_j > 0$. We then set $w_{1,j} = p_j^{e_j - c_j}$ and $w_{2,j} = p_j^{e_j - c_j}(1 + p_j^\ell)$. In characteristic 2, $w_{1,j} + w_{2,j} = p_j^{e_j - k_j}$.
    If $k_{1,j} = k_{2,j} = c_j > 0$ and $k_j = c_j$, we choose an arbitrary element $\bar{u}_1 \in R/(p_j) \setminus \{\bar{0}, \bar{1}\}$. Let $u_1 \in R$ be a lift. Since $R \subset R_S$, this guarantees $u_1 \in R_S$ for the CRT application. Set $w_{1,j} = p_j^{e_j - c_j} u_1$ and $w_{2,j} = p_j^{e_j - c_j}$. Since $u_1 \not\equiv 0$ and $u_1 + 1 \not\equiv 0 \pmod{p_j}$, the valuations of $w_{1,j}, w_{2,j}$ and their sum are all strictly $e_j - c_j$.

    In all cases, \eqref{eq:local_target} is satisfied.

    To elevate these local elements to global elements in $R$, we define the multiplicatively closed set $S = R \setminus \bigcup_{j=1}^r (p_j)$.
    By the Prime Avoidance Lemma, any prime ideal of $R$ contained in $\bigcup_{j=1}^r (p_j)$ must be contained in some $(p_j)$. Consequently, the only non-zero prime ideals in the localization $R_S$ are $(p_1)R_S, \dots, (p_r)R_S$, which are maximal. Since $R_S$ is a regular domain of Krull dimension 1 with finitely many maximal ideals, $R_S$ is a semilocal principal ideal domain.
    Within this PID $R_S$, the elements $p_j$ remain pairwise non-associate, meaning the ideals $(p_j^{e_j+1})R_S$ are strictly pairwise comaximal. By the Chinese Remainder Theorem in $R_S$, there exist elements $M_1, M_2 \in R_S$ satisfying the simultaneous congruences:
    \begin{equation}
        M_i \equiv w_{i,j} \pmod{p_j^{e_j+1} R_S} \quad \text{for all } 1 \le j \le r \text{ and } i \in \{1, 2\}.
    \end{equation}
    Since $M_1, M_2 \in R_S$, there exists a common denominator $s \in S$ such that $s M_1 \in R$ and $s M_2 \in R$. Define global elements $\mathfrak{m}_i = s M_i$. By the definition of $S$, $s \notin (p_j)$ for all $j$, which enforces $v_j(s) = 0$. Hence $v_j(\mathfrak{m}_i) = v_j(M_i)$, and $v_j(\mathfrak{m}_1 + \mathfrak{m}_2) = v_j(M_1 + M_2)$.

    Finally, we verify the truncated valuations. From $M_i - w_{i,j} \in p_j^{e_j+1}R_S$, we have $v_j(M_i - w_{i,j}) \ge e_j + 1$.
    The non-Archimedean minimum properties yield:
    \begin{equation}
        \min\big(v_j(\mathfrak{m}_i), e_j\big) = \min\big(v_j(M_i), e_j\big) = \min\big(v_j(w_{i,j}), e_j\big) = e_j - k_{i,j}.
    \end{equation}
    Similarly, $v_j\big((M_1 + M_2) - (w_{1,j} + w_{2,j})\big) \ge e_j + 1$. Truncating at $e_j$, we obtain:
    \begin{equation}
        \min\big(v_j(\mathfrak{m}_1 + \mathfrak{m}_2), e_j\big) = \min\big(v_j(M_1 + M_2), e_j\big) = \min\big(v_j(w_{1,j} + w_{2,j}), e_j\big) = e_j - k_j.
    \end{equation}
    Because the evaluations match identically at each irreducible factor up to $e_j$, the global elements satisfy $g(\mathfrak{m}_1) = \prod p_j^{k_{1,j}}$, $g(\mathfrak{m}_2) = \prod p_j^{k_{2,j}}$, and $g(\mathfrak{m}_1 + \mathfrak{m}_2) = \prod p_j^{k_j}$.
\end{proof}

This theorem gives universal proof for the fusion of mobility classes given in the Letter.

\subsection{Model (a): Fibonacci-like fusion}
\label{sec:model-a-calc}

For Model (a), the decoration polynomial is $f_1(x,y) = 1+y$. In $\mathbb{F}_2[x,y]$, $f_1$ has a single irreducible factor $p_1 = 1+y$ with multiplicity $e_1 = 1$. The characteristic polynomial of any excitation $\mathfrak{m}$ takes the form $g(\mathfrak{m}) = (1+y)^{k_1}$, where $k_1 \in \{0, 1\}$.

The mobility class is uniquely determined by this exponent. When $k_1 = 0$, $g(\mathfrak{m}) = 1$, and Theorem~\ref{theorem1} dictates that the excitation is fully mobile ($\alpha$). When $k_1 = 1$, $g(\mathfrak{m}) = 1+y$. This polynomial has a one-dimensional Newton polygon, and the formal inverse of $t(y) = 1+y$ is $1/(1+y) = \sum_{n=0}^{\infty} y^n$, which has a period $T=1$. This corresponds to a lineon moving along the $y$-axis ($\beta_{y,1}$).

Considering the mobility fusion $\beta_{y,1} \times \beta_{y,1}$, both input excitations carry the exponent $k_{1,1} = k_{2,1} = 1$. According to Theorem~\ref{theorem2}, the characteristic polynomial of their composite $g(\mathfrak{m}_1+\mathfrak{m}_2) = (1+y)^{k_1}$ must have an exponent satisfying $k_1 \in \{0, 1\}$. These two allowed values exactly correspond to the mobility classes $\alpha$ and $\beta_{y,1}$. Since the theorem guarantees both channels can be physically realized for appropriate local configurations, we directly obtain the Fibonacci-like fusion rule $\beta_{y,1} \times \beta_{y,1} = \alpha + \beta_{y,1}$.

\subsection{Model (b): Tensor product algebra}
\label{sec:model-b-calc}

For Model (b), the decoration polynomial $f_2(x,y) = (1+x)(1+y)$ factors into two distinct irreducible polynomials: $p_1 = 1+x$ and $p_2 = 1+y$, each with multiplicity $e_1=e_2=1$. The characteristic polynomial of an excitation takes the form $g(\mathfrak{m}) = (1+x)^{k_1}(1+y)^{k_2}$, labeled by the exponent vector $(k_1, k_2) \in \{0, 1\}^2$.

The four possible exponent vectors map directly to four distinct mobility classes. The trivial vector $(0,0)$ gives $g(\mathfrak{m}) = 1$, representing the fully mobile class $\alpha$. The vectors $(1,0)$ and $(0,1)$ yield $1+x$ and $1+y$, which are horizontal ($\beta_{x,1}$) and vertical ($\beta_{y,1}$) lineons, respectively. Finally, $(1,1)$ gives $g(\mathfrak{m}) = (1+x)(1+y)$. Because its Newton polygon has dimension 2, this excitation is an immobile fracton ($\gamma$).

Since $p_1$ and $p_2$ are distinct irreducible factors, Theorem~\ref{theorem2} implies that the exponent fusion for $k_1$ and $k_2$ proceeds entirely independently. The fusion rule for each component is simply $1 \times 1 \to \{0, 1\}$. The overall mobility fusion algebra is therefore the tensor product of two independent Fibonacci-like rules, namely $\textbf{Fib} \boxtimes \textbf{Fib}$.

As a concrete example, fusing two fractons $\gamma \times \gamma$ means fusing $(1,1)$ with $(1,1)$. Applying the theorem component-wise restricts the composite exponents to $k_1 \in \{0, 1\}$ and $k_2 \in \{0, 1\}$. This generates all four possible exponent combinations, reproducing the rule $\gamma \times \gamma = \alpha + \beta_{x,1} + \beta_{y,1} + \gamma$. Similarly, the fusion of two orthogonal lineons $\beta_{x,1} \times \beta_{y,1}$ corresponds to the inputs $(1,0)$ and $(0,1)$. The non-matching exponents must take their maximum values, resulting in a unique composite vector $(1,1)$. This confirms the deterministic fusion channel $\beta_{x,1} \times \beta_{y,1} = \gamma$.

\subsection{Model (c): Period transmutation and splitting}
\label{sec:model-c-calc}

For Model (c), the decoration polynomial is $f_3(x,y) = (1+x)(1+y^2)(1+y+y^2)$. Over $\mathbb{F}_2$, it factors as $(1+x)^1 (1+y)^2 (1+y+y^2)^1$. We identify three irreducible factors: $p_1 = 1+x$ ($e_1=1$), $p_2 = 1+y$ ($e_2=2$), and $p_3 = 1+y+y^2$ ($e_3=1$). The characteristic polynomial $g(\mathfrak{m}) = p_1^{k_1} p_2^{k_2} p_3^{k_3}$ is labeled by the vector $(k_1, k_2, k_3) \in \{0,1\} \times \{0,1,2\} \times \{0,1\}$.

To analyze the rich structure of vertical lineons, we restrict our attention to $k_1 = 0$, giving $g(\mathfrak{m}) = (1+y)^{k_2} (1+y+y^2)^{k_3}$. The period of each mobility class is extracted from the formal inverse of this polynomial. The trivial vector $(0,0)$ gives $g=1$, corresponding to $\alpha$. For $k_3=0$, the vectors $(1,0)$ and $(2,0)$ yield $1+y$ and $1+y^2$, whose inverses have periods $T=1$ ($\beta_{y,1}$) and $T=2$ ($\beta_{y,2}$), respectively. When $k_3=1$, the vector $(0,1)$ yields $1+y+y^2$, which has a period of $T=3$ ($\beta_{y,3}$). The composite vector $(1,1)$ gives $(1+y)(1+y+y^2) = 1+y^3$, which also has a period of $T=3$ ($\beta_{y,3}$). Finally, $(2,1)$ yields $(1+y)^2(1+y+y^2) = 1+y+y^3+y^4$, whose formal inverse exhibits a period of $T=6$ ($\beta_{y,6}$).

This algebraic mapping naturally explains the convergence and splitting behaviors of lineon periods. For the period convergence process $\beta_{y,2} \times \beta_{y,3} = \beta_{y,6}$, the inputs are $(2,0)$ and $(0,1)$. Since the exponents differ for each irreducible factor, Theorem~\ref{theorem2} requires the composite exponents to be the component-wise maxima: $k_2 = \max(2,0) = 2$ and $k_3 = \max(0,1) = 1$. This uniquely dictates the vector $(2,1)$, which is exactly $\beta_{y,6}$. The periods combine via the least common multiple because they originate from distinct irreducible constraints.

In contrast, the period splitting process occurs when fusing identical constraints, such as $\beta_{y,6} \times \beta_{y,6}$. The inputs are both $(2,1)$. Theorem~\ref{theorem2} now permits the composite exponents to take any value up to the input values: $k_2 \in \{0,1,2\}$ and $k_3 \in \{0,1\}$. These 6 possible combinations map to the physical mobility classes as follows: $(0,0)$ maps to $\alpha$, $(1,0)$ to $\beta_{y,1}$, $(2,0)$ to $\beta_{y,2}$, both $(0,1)$ and $(1,1)$ map to $\beta_{y,3}$, and $(2,1)$ maps to $\beta_{y,6}$. Grouping these classes together precisely yields the multi-channel splitting rule $\beta_{y,6} \times \beta_{y,6} = \alpha + \beta_{y,1} + \beta_{y,2} + \beta_{y,3} + \beta_{y,6}$. The complex transmutation of lineon periods is thus a rigorous consequence of the underlying $p$-adic valuation structure.

\end{document}